\newcommand*{\rom}[1]{\expandafter\@slowromancap\romannumeral #1@}
\tikzset{
  treenode/.style = {align=center, inner sep=0pt, text centered,
    font=\sffamily},
    arn_n/.style = {treenode, circle,  draw=black, text width=1.5em,very thick},
  arn_r/.style = {treenode, circle,draw=black, text width=1.5em},
  arn_y/.style = {treenode, circle, draw=black, text width=1.8em}
}
\begin{document}

\newcommand{\0}{{\mathbf{0}}}
\newcommand{\A}{{\mathbf{A}}}
\newcommand{\U}{{\mathbf{U}}}
\renewcommand{\u}{{\mathbf{u}}}
\newcommand{\E}{{\mathbf{E}}}
\newcommand{\B}{{\mathbf{B}}}
\newcommand{\C}{{\mathbf{C}}}
\newcommand{\D}{{\mathbf{D}}}
\newcommand{\cD}{{\mathcal{D}}}
\newcommand{\CD}{{\mathcal{D}}}
\newcommand{\e}{{\mathbf{e}}}
\renewcommand{\c}{{\mathbf{c}}}
\newcommand{\f}{{\mathbf{f}}}
\newcommand{\F}{{\mathbf{F}}}
\newcommand{\bbe}{{\mathbbm{e}}}
\newcommand{\Ad}{{\mathcal{A}}}
\newcommand{\Fd}{\mathcal{F}}
\newcommand{\FF}{\mathbf{F}}
\newcommand{\GG}{\mathbf{G}}
\newcommand{\I}{{\mathcal{I}}}
\newcommand{\N}{\mathbb{N}}
\newcommand{\Bd}{{\mathcal{B}}}
\newcommand{\Sd}{{\mathcal{S}}}
\newcommand{\R}{{\mathbf{R}}}
\newcommand{\s}{{\mathbf{s}}}
\renewcommand{\d}{{\mathbf{d}}}
\newcommand{\T}{{\mathbf{T}}}
\newcommand{\Ts}{{\mathcal{T}}}
\newcommand{\x}{{\overrightarrow{x}}}
\newcommand{\y}{{\overrightarrow{y}}}
\newcommand{\Z}{{\mathbb{Z}}}
\newcommand{\z}{{\mathbf{z}}}
\newcommand{\w}{{\mathbf{w}}}
\newcommand{\W}{{\mathbf{W}}}
\newcommand{\rr}{{\mathbf{r}}}
\newcommand{\uu}{{\mathbf{u}}}
\newcommand{\V}{{\mathbf{V}}}
\renewcommand{\v}{{\mathbf{v}}}
\newcommand{\M}{{\mathbf{M}}}
\renewcommand{\S}{{\mathbf{S}}}
\newcommand{\ZZ}{{\mathbf{Z}}}
\newcommand{\HH}{{\mathbf{H}}}
\newcommand{\X}{{\mathbf{X}}}

\newcommand{\TrapGen}{\textsf{TrapGen}}
\newcommand{\KeyGen}{\mathsf{KeyGen}}
\newcommand{\Setup}{\mathsf{Setup}}
\newcommand{\Enc}{\mathsf{Enc}}
\newcommand{\Dec}{\mathsf{Dec}}
\newcommand{\UserKG}{\mathsf{UserKG}}
\newcommand{\Token}{\mathsf{Token}}
\newcommand{\UpdKG}{\mathsf{UpdKG}}
\newcommand{\TranKG}{\mathsf{TranKG}}
\newcommand{\Transform}{\mathsf{Transform}}
\newcommand{\Revoke}{\mathsf{Revoke}}
\newcommand{\Sys}{\mathsf{Sys}}
\newcommand{\Path}{\textsf{Path}}
\newcommand{\KUNodes}{\textsf{KUNodes}}
\newcommand{\sk}{\textsf{sk}}
\newcommand{\pk}{\textsf{pk}}
\newcommand{\RL}{\textsf{RL}}
\newcommand{\st}{\textsf{st}}
\newcommand{\params}{\textsf{params}}
\newcommand{\pp}{\textsf{pp}}
\newcommand{\msk}{\textsf{msk}}
\newcommand{\ct}{\textsf{ct}}
\newcommand{\BT}{\mathsf{BT}}
\newcommand{\bin}{\mathsf{bin}}
\newcommand{\id}{\textsf{id}}
\newcommand{\uk}{\textsf{uk}}
\renewcommand{\t}{\textsf{t}}
\renewcommand{\H}{\textsf{H}}
\newcommand{\tk}{\textsf{tk}}
\newcommand{\dk}{\textsf{dk}}

\newcommand{\SampleRwithBasis}{\textsf{SampleRwithBasis}}
\newcommand{\SamplePre}{\textsf{SamplePre}}
\newcommand{\HIBE}{\textsf{HIBE}}
\newcommand{\RIBE}{\textsf{RIBE}}
\newcommand{\PKE}{\textsf{PKE}}
\newcommand{\PE}{\textsf{PE}}
\newcommand{\SampleLeft}{\textsf{SampleLeft}}
\newcommand{\SampleBLeft}{\textsf{SampleBasisLeft}}
\newcommand{\SampleRight}{\textsf{SampleRight}}
\newcommand{\SampleBRight}

\title{Server-Aided Revocable Predicate Encryption:\\Formalization and Lattice-Based Instantiation}
\author{San Ling,
 Khoa Nguyen,
 Huaxiong Wang,
 Juanyang Zhang
}
\institute{Division of Mathematical Sciences, School of Physical and Mathematical Sciences,\\
Nanyang Technological University, Singapore\\
\mailsa}
\maketitle

\begin{abstract}
Efficient user revocation is a necessary but challenging problem in many multi-user cryptosystems. Among known approaches, server-aided revocation yields a promising solution, because it allows to outsource the major workloads of system users to a computationally powerful third party, called the server, whose only requirement is to carry out the computations correctly. Such a revocation mechanism was considered in the settings of identity-based encryption and attribute-based encryption by Qin et al. (ESORICS 2015) and Cui et al. (ESORICS 2016), respectively.

In this work, we consider the server-aided revocation mechanism in the more elaborate setting of predicate encryption (PE). The latter, introduced by  Katz, Sahai, and Waters (EUROCRYPT 2008), provides fine-grained and role-based access to encrypted data and can be viewed as a generalization of identity-based and attribute-based encryption.
Our contribution is two-fold. First, we formalize the model of server-aided revocable predicate encryption (SR-PE), with  rigorous definitions and security notions. Our model can be seen as a non-trivial adaptation of Cui et al.'s work into the PE context. Second,
we put forward a lattice-based instantiation of SR-PE. The scheme employs the PE scheme of Agrawal, Freeman and Vaikuntanathan (ASIACRYPT 2011) and the complete subtree method of Naor, Naor, and Lotspiech (CRYPTO 2001) as the two main ingredients, which work smoothly together thanks to a few additional techniques. Our scheme is proven secure in the standard model (in a selective manner), based on the hardness of the Learning With Errors $(\mathsf{LWE})$ problem.
\end{abstract}

\section{Introduction}\label{section: Introduction}
The notion of predicate encryption (PE), formalized by Katz, Sahai, and Waters~\cite{KatzSW08}, is an emerging paradigm of public-key encryption, which provides fine-grained and role-based access to encrypted data. In a PE scheme, the user's private key, issued by the key generation center (KGC), is associated with a predicate~$f$, while a ciphertext is bound to an attribute~$I$.
Then the system ensures that the user can decrypt the ciphertext if and only if $f(I)=1$.
PE can be viewed as a generalization of attribute-based encryption (ABE)~\cite{SahaiW05,GoyalPSW06}.
Whereas the latter reveals the attribute bound to each ciphertext, the former preserves the privacy of not only the encrypted data but also the attribute. These powerful properties of PE yield numerous potential applications (see, e.g., \cite{BonehW07,ShiBCSP07,KatzSW08}).

As for many other multi-user cryptosystems, an efficient revocation mechanism is necessary and imperative in the PE setting. When some users misbehave or when their private keys are compromised, the users should be revoked from the system and we would need a non-trivial mechanism ensuring that: (i) revoked users can no longer decrypt ciphertexts; (ii) the workloads of the KGC and the non-revoked users in updating the system are not too taxing.
In the ABE setting, Boldyreval et al.~\cite{BoldyrevaGK08} put forward a revocation mechanism based on a time-based key update procedure.  In their approach, a ciphertext is not only bound to an attribute but also to a time period. The KGC, who possesses the up-to-date list of revoked users, has to publish an update key at each time period so that only non-revoked users can update their private keys to decrypt ciphertexts bound to the same time slot.  To make the KGC's workload scalable (i.e., being logarithmic in the maximum number of users~$N$), Boldyreval et al. suggested to employ the subset-cover framework due to Naor et al.~\cite{NaorNL01} to handle key updating. Concrete pairing-based instantiations of revocable ABE following this approach were proposed in~\cite{Attrapadung2009,SahaiSW12}.

In Boldyreval et al.'s model, however, the non-revoked users have to communicate with the KGC regularly to receive the update keys. Although such key updating process can be done through a public channel, it is somewhat inconvenient and bandwidth-consuming. To reduce the users's computational burden, Qin et al.~\cite{SRIBE} proposed an interesting solution in the context of identity-based encryption (IBE), called server-aided revocable identity-based encryption (SR-IBE). Qin et al.'s model takes advantage of a publicly accessible server with powerful computational capabilities, to which one can outsource  most of users' workloads.  Moreover, the server can be untrusted in the sense that it does not possess any secret information.

Cui et al.~\cite{CuiDLQ16} subsequently adapted the server-aided revocation mechanism into the ABE setting and introduced server-aided revocable attribute-based encryption (SR-ABE). Briefly speaking, an SR-ABE scheme works as follows.
When a new user joins the system, he generates a public-secret key-pair, and sends the public key to the KGC~\footnote{Alternatively, as pointed out by Cui et al.~\cite{CuiDLQ16}, the key-pair can be generated by the KGC and then sent to the user. This requires a secure channel - which is typically assumed to be available in the setting of centralized cryptosystems.}. The latter then generates
a user-specific token that is forwarded to the untrusted server through a public channel.
At each time period, the update key is sent only to the server rather than to all users. To perform decryption for a specific user, the server first transforms the ciphertext into a ``partially decrypted ciphertext''.  The latter is bound to the user's public key, so that only the intended user can recover the plaintext using his private key. In~\cite{CuiDLQ16}, apart from introducing this new model, Cui et al. also described a pairing-based instantiation of SR-ABE.

In this work, inspired by the potentials of PE and the advantages of the server-aided revocation mechanism, we consider the notion of sever-aided revocable predicate encryption, and aim to design the first such scheme from lattice assumptions.

\medskip

\noindent
{\sc Other related works. }
 The subset-cover framework, proposed by Naor et al.~\cite{NaorNL01},
 is arguably the most well-known revocation technique for multi-user
systems. It uses a binary tree, each leaf of which is designated to each user. Non-revoked users are partitioned
into disjoint subsets, and are assigned keys according to the complete subtree (CS) method or the subset
difference (SD) method. This framework was considered to address user revocation for identity-based encryption (IBE) schemes, with constructions from pairings~\cite{BoldyrevaGK08,LibertV09,SeoE13,LeeLP14} and from lattices~\cite{ChenLLWN12,ChengZhang2015,Takayasu017}. It also found applications in the context of revocable group signatures~\cite{LibertPY12a,LibertPY12b} and revocable ABE schemes~\cite{BoldyrevaGK08,Attrapadung2009,SahaiSW12}.

Lattice-based cryptography, pioneered by Ajtai~\cite{Ajtai96}, Regev~\cite{Regev05} and
Gentry et al.~\cite{GPV08}, has been an exciting research area in the last decade, providing several advantages over conventional number-theoretic cryptography, such as faster arithmetic operations and conjectured resistance against quantum computers. Among other primitives, lattice-based revocable cryptosystems have been receiving considerable attention.

Chen et al.~\cite{ChenLLWN12} initiated the study of lattice-based revocable IBE, equipping the scheme by Agrawal, Boneh and Boyen~\cite{ABB10} with a revocation method following Boldyreval et al.'s blueprint~\cite{BoldyrevaGK08}. Chen et al.'s construction has been improved in two directions. Nguyen et al.~\cite{NguyenWZ16} extended it into an SR-IBE scheme, using a hierarchical IBE~\cite{ABB10} and a double encryption technique~\cite{DoubleEnc} in the process.
Takayasu and Watanabe~\cite{Takayasu017} developed a scheme with enhanced security, which is, to some extent, resistant against decryption key exposure attacks~\cite{SeoE13}.
Very recently, a major related result was obtained by
Agrawal et al.~\cite{AgrawalBPS017}, who established a lattice-based identity-based trace-and-revoke system, via
an elegant generic construction from functional encryption for inner products.

Beyond the IBE setting, Ling et at.~\cite{LingNWZ17} provided a revocation method for the lattice-based PE scheme by Agrawal, Freeman and Vaikuntanathan~\cite{AgrawalFV11}. To this end, Ling et al. employs the direct revocation approach~\cite{NietoMS12},
where the revocation information is directly embedded into each ciphertext. This approach eliminates the necessity of the key-update phase, but it produces ciphertexts of relatively large size, depending on
the number of all users $N$ and/or the number of revoked users $r$. For the time being, the problem of constructing lattice-based revocable PE schemes featuring constant-size ciphertexts remains open.

\medskip

\noindent
{\sc Our results and techniques. } The contribution of this work is two-fold: We first formalize the concept of server-aided predicate encryption (SR-PE), and then put forward an instantiation of SR-PE from lattices. An overview of these two results is given below.

Our model of SR-PE inherits the main advantage of the server-aided revocation mechanism~\cite{SRIBE}: most of the users' workloads are delegated to an untrusted server. The model can be seen as a non-trivial adaptation of Cui et al.'s model of SR-ABE~\cite{CuiDLQ16} into the PE setting, with two notable distinctions. First, while Cui et al. assume a public-secret key-pair for each user, we do not require users to maintain their own public keys.
Recall that Shamir's~\cite{Shamir84} motivation to initiate the study of IBE is to eliminate the burden of managing public keys. The more general notions of ABE and PE later inherit this advantage over traditional public key encryption. From this point of view, the re-introduction of users' public keys seems contradict to the spirit of identity-based/attribute-based/predicate cryptosystems. Thus, by not demanding the existence of users' public keys, we make our model consistent with ordinary (i.e., non-revocable) predicate encryption.
Second, our security definition reflects the attribute-hiding property of PE systems, which guarantees that attributes bound to the ciphertexts are not revealed during decryptions, and which is not considered in the context of ABE.


As an effort to instantiate a scheme satisfying our model under post-quantum assumptions, we design a lattice-based construction that is proven secure (in a selective manner) in the standard model, assuming the hardness of the Learning With Errors ($\mathsf{LWE}$) problem~\cite{Regev05}.
The efficiency of our scheme is comparable to that of the constructions under the server-aided revocation approach~\cite{SRIBE,CuiDLQ16,NguyenWZ16}, in the following sense. The sizes of private keys and ciphertexts, as well as the complexity of decryption on the user side are all independent of the number of users $N$ and the number of revoked users $r$. In particular, the ciphertext size in our scheme compares favourably to that of Ling et al.'s revocable PE scheme~\cite{LingNWZ17}, which follows the direct revocation approach. It is also worth mentioning that, if we do not assume the availability of the server (which does not affect security because the server does not possess any secret key) and let the users perform the server's work themselves, then our scheme would yield the first (non-server-aided) lattice-based PE with constant-size ciphertexts.
At a high level, our scheme employs two main building blocks:
the ordinary PE scheme by Agrawal et al.~\cite{AgrawalFV11} and the CS method due to Naor et al.~\cite{NaorNL01}.
We observe that the same two ingredients were adopted in Ling et at.'s scheme~\cite{LingNWZ17}, but their direct revocation approach is fundamentally different from ours, and thus, we have to find a new way to make these ingredients work smoothly together.

Our first challenge is to enable a relatively sophisticated mechanism, in which an original PE ciphertext bound to an attribute and a time period (but not bound to any user's identifying information), after being transformed by the server, would become a partially decrypted ciphertext bound to the identifying information of the non-revoked recipient. We note that, in the setting of lattice-based SR-IBE, Nguyen et al.~\cite{NguyenWZ16} addressed a somewhat related problem using a double encryption technique, where the original and the partially decrypted ciphertexts are both bound to the recipient's identity and time period. However such technique requires the sender to know the recipient's identity when generating the ciphertext, and hence, it is not applicable to the PE setting. We further note that, Cui et al.~\cite{CuiDLQ16} solved a more closely related problem, in which the partially decrypted ciphertext is constrained to bind to the recipient's public key - with respect to some public-key encryption (PKE) system. We observe that it is possible to adapt the technique from~\cite{CuiDLQ16}, but as our SR-PE model does not work with users' public keys, we will instead make use of an IBE instance. Namely, we additionally employ the IBE from~\cite{ABB10} and assign each user an identity $\id$. The challenge now is how to embed $\id$ into the user-specific token in a way such that the partially decrypted ciphertext will be bound to $\id$.

To address the above problem, we exploit a special property of some \textsf{LWE}-based encryption systems, observed by Boneh et al.~\cite{BonehGGHNSVV14}, which allows to transform an encryption of a message under one key into an encryption of the same message under another key.
Then, our scheme works roughly as follows.
Each user with identity $\id$ is issued a private key for a two-level hierarchical system consisting of one instance of the PE system from~\cite{AgrawalFV11} as well as an additional IBE level for~$\id$, associated with a matrix $\D_\id$.
Meanwhile, the token for $\id$ is generated by embedding $\D_\id$  into another instance of the same PE system~\cite{AgrawalFV11}.
At each time period $\t$, the KGC employs the CS method to compute an update key $\uk_\t$ and sends it to the server.
A ciphertext in our scheme is a combination of two PE ciphertexts and an extra component bound to $\t$.
If recipient $\id$ is not revoked at time period $\t$, the server can use the token for $\id$ and $\uk_\t$ to transform the second PE ciphertext into an IBE ciphertext associated with $\D_\id$, thanks to the special property mentioned above.
Finally, the partially decrypted ciphertext, consisting of the first PE ciphertext and the IBE ciphertext, can be fully decrypted using the private key of $\id$.

The security of our proposed SR-PE scheme relies on that of the two lattice-based components from~\cite{AgrawalFV11} and~\cite{ABB10}. Both of them are selectively secure in the standard model, assuming
the hardness of the \textsf{LWE} problem - so is our scheme.

\medskip

\noindent
{\sc Organization. } The rest of this paper is organized as follows. In Section \ref{section: background}, we briefly recall some background about lattices and the CS method. We give the rigorous definitions and security model of SR-PE in Section~\ref{section: model}. Our lattice-based instantiation of SR-PE is described in Section~\ref{section: main scheme} and analyzed in Section~\ref{section: analysis}. Finally, Section~\ref{section: collusion} concludes the paper.

\section{Preliminaries} \label{section: background}
{\sc Notations. }The acronym PPT stands for ``probabilistic polynomial-time''. 
We often write~$x\hookleftarrow\chi$ to indicate that we sample~$x$ from probability distribution $\chi$. If~$\Omega$ is a finite set, the notation~$x\stackrel{\$}{\leftarrow}\Omega$ means that~$x$ is chosen uniformly at random from~$\Omega$. Meanwhile, if $x$ is an output of PPT algorithm $\mathcal{A}$, then we write $x \leftarrow \mathcal{A}$.

We use bold upper-case letters (e.g., $\A,\B$) to denote matrices and use bold lower-case letters (e.g., $\mathbf{x},\mathbf{y}$) to denote column vectors. In addition, we user over-arrows to denote predicate and attribute vectors as $\x,\y$.  For two matrices $\A\in\mathbb{R}^{n\times m}$ and $\B\in\mathbb{R}^{n\times k}$, we denote by ~$\left[\A \mid \B\right]\in\mathbb{R}^{n\times (m+k)}$ the column-concatenation of $\A$ and $\B$. For a vector $\mathbf{x}\in\Z^{n}$, $||\mathbf{x}||$ denotes the Euclidean norm of $\mathbf{x}$. We use~$\widetilde{\A}$ to denote the Gram-Schmidt orthogonalization of matrix $\A$, and $||{\A}||$ to denote the Euclidean norm of the longest column in $\A$.  If $n$ is a positive integer,~$[n]$ denotes the set~$\{1,..,n\}$.
For~$c\in \mathbb{R}$, let $\lfloor c \rceil =\lceil c-1/2 \rceil$ denote the integer closest to~$c$.

\subsection{Background on Lattices}\label{subsection:lattice-background}
\noindent
{\bf Integer lattices. }
An  {\em $m$-dimensional  lattice} $\Lambda$ is a  discrete subgroup of $\mathbb{R}^m$. A full-rank matrix $\B\in\mathbb{R}^{m\times m}$ is a {\em  basis} of $\Lambda$ if $\Lambda=\{ \mathbf{y}\in\mathbb{R}^m: \exists \hspace*{1pt}\s\in\Z^m,\mathbf{y}=\B \cdot \s  \}.$
 We consider integer lattices, i.e., when $\Lambda\subseteq\Z^m$. For any integer $q\geq 2$ and any  $\mathbf{A} \in \mathbb{Z}_q^{n \times m}$, define the $q$-ary lattice:
$$\Lambda^\bot_q(\mathbf{A})= \big\{\mathbf{r} \in \mathbb{Z}^m: \hspace{7pt} \mathbf{A}\cdot \mathbf{r}= \mathbf{0} \bmod q\big\} \subseteq \mathbb{Z}^m.$$
For any $\mathbf{u}$ in the image of $\mathbf{A}$, define $\Lambda_{q}^{\mathbf{u}}(\mathbf{A})= \big\{\mathbf{r} \in \mathbb{Z}^m:\hspace{3pt} \mathbf{A}\cdot \mathbf{r}= \mathbf{u} \bmod q\big\}$.

A fundamental tool in lattice-based cryptography is an algorithm that generates a matrix $\mathbf{A}$ close to uniform together with a short basis $\T_{\A}$ of $\Lambda^{\bot}_q(\A)$.
\begin{lemma}[{\cite{Ajtai99,AP09,MicciancioP12}}]\label{Lemma:TrapGen}
   Let $n\geq 1,q\geq 2$ and $m\geq 2n  \log q $ be integers. There exists a PPT algorithm $\mathsf{TrapGen}(n,q,m)$ that outputs a pair $(\A, \T_\A)$ such that $\A$ is statistically close to uniform over $\mathbb{Z}_q^{n\times m}$ and $\T_\A \in \mathbb{Z}^{m\times m}$ is a basis for $\Lambda^{\bot}_q(\A)$, satisfying $$\|\mathbf{\widetilde{ T_A}}\|\leq O(\sqrt{n \log q}) \text{ and } \| \T_\A \| \leq O(n \log q)$$
   with all but negligible probability in $n$.
 \end{lemma}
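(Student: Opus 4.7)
The plan is to follow the streamlined construction due to Micciancio and Peikert, which realises all three cited results in a uniform way. I would fix a publicly known ``gadget'' matrix $\GG \in \Z_q^{n \times nk}$ with $k = \lceil \log q \rceil$, together with its known short basis $\T_{\GG}$ of $\Lambda_q^\bot(\GG)$ satisfying $\|\widetilde{\T_{\GG}}\| \leq \sqrt{5}$. Writing $m = \bar{m} + nk$ with $\bar{m} \geq n \log q$ (which is permitted by the hypothesis $m \geq 2n\log q$), I would sample $\bar{\A} \stackrel{\$}{\leftarrow} \Z_q^{n \times \bar{m}}$ uniformly and $\R \in \Z^{\bar{m} \times nk}$ with low-norm entries (for example, each entry drawn independently and uniformly from $\{-1,0,1\}$, or from a narrow discrete Gaussian), and output $\A = [\,\bar{\A} \mid \GG - \bar{\A}\R\,] \in \Z_q^{n \times m}$, retaining $\R$ as the trapdoor.

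The first step is to argue that $\A$ is statistically close to uniform over $\Z_q^{n \times m}$. This is a direct application of the leftover hash lemma: because $\R$ has low-norm entries and the column dimension $\bar{m}$ exceeds $n\log q$ by the required constant factor, the joint distribution $(\bar{\A}, \bar{\A}\R)$ lies within negligible statistical distance of the uniform distribution on $\Z_q^{n \times \bar{m}} \times \Z_q^{n \times nk}$. Concatenation with the deterministic block $\GG$ preserves this closeness, so $\A$ is statistically uniform as claimed.

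The second step is to extract the basis $\T_\A$ of $\Lambda_q^\bot(\A)$ from $\R$ and $\T_{\GG}$. Using the block structure of $\A$, one checks that the vectors obtained by stacking $\R \cdot \T_{\GG}$ on top of $\T_{\GG}$, together with the short vectors $\bigl[\begin{smallmatrix}\mathbf{r}_i\\ \mathbf{e}_i\end{smallmatrix}\bigr]$ coming from the identity $\bar{\A}\mathbf{r}_i + (\GG - \bar{\A}\R)\mathbf{e}_i = 0$, span $\Lambda_q^\bot(\A)$ and form a basis after a routine unimodular transformation. A Gram-Schmidt analysis then bounds $\|\widetilde{\T_\A}\|$ by $(1 + s_1(\R)) \cdot \|\widetilde{\T_{\GG}}\|$, and since $s_1(\R) = O(\sqrt{n\log q})$ with overwhelming probability (standard singular-value tail bound for random low-norm matrices), we obtain $\|\widetilde{\T_\A}\| = O(\sqrt{n\log q})$. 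The bound $\|\T_\A\| \le O(n\log q)$ then follows from the standard inequality $\|\T_\A\| \le \sqrt{m}\cdot\|\widetilde{\T_\A}\|$ applied to a suitable reduction of the basis.

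The main obstacle is purely parameter bookkeeping: one must track the explicit constants so that the assumption $m \geq 2n\log q$ is sufficient both for the leftover hash lemma to produce negligible statistical distance and for the singular-value tail bound on $\R$ to hold with overwhelming probability. At the conceptual level, all ingredients - the gadget construction, the leftover hash lemma, and the Gram-Schmidt norm estimate - are standard, and the complete argument with tight parameters appears in the cited works.
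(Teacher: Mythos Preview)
The paper does not supply its own proof of this lemma: it is stated as background and attributed directly to \cite{Ajtai99,AP09,MicciancioP12} without further argument. Your sketch is a faithful outline of the Micciancio--Peikert construction (the last of the three cited sources), and the high-level steps---gadget matrix, leftover hash lemma for statistical uniformity, conversion of the $\GG$-trapdoor $\R$ into a short basis via $\T_\GG$, and the singular-value bound $s_1(\R)=O(\sqrt{n\log q})$---are all correct. The basis-extraction paragraph is slightly imprecise (the $nk$ columns $\bigl[\begin{smallmatrix}\R\T_\GG\\\T_\GG\end{smallmatrix}\bigr]$ must be completed to a full rank-$m$ basis, and this completion is where the careful Gram--Schmidt analysis in \cite{MicciancioP12} is actually needed), but you correctly flag that the full details are in the cited works, so there is no real gap.
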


 Micciancio and Peikert~\cite{MicciancioP12} consider a structured matrix $\GG$, called the \emph{primitive matrix}, which admits a publicly known short basis $\T_{\GG}$ of $\Lambda^{\bot}_q(\GG)$.

\begin{lemma}[\cite{MicciancioP12}]\label{lemma:primitive matrix}
 Let $n\geq 1,q\geq 2$ be integers and let $m \geq n \lceil \log q \rceil $. There exists a full-rank matrix  $\GG \in \Z_q^{n\times m}$ such that the lattice $\Lambda_q ^{\bot}(\GG)$ has a known basis $\T_{\GG}\in\Z^{m\times m}$ with $||\widetilde{\T_{\GG}}||\leq \sqrt{5}$.

Furthermore, there exists a deterministic polynomial-time algorithm $\GG^{-1}$ which takes the input $\U\in\Z_q^{n\times m}$ and outputs $\mathbf{X}=\GG^{-1}(\U)$ such that $\mathbf{X}\in\{0,1\}^{m\times m}$ and $\GG\mathbf{X}=\U$.
\end{lemma}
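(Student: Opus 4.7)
The plan is to follow the explicit gadget-matrix construction of Micciancio and Peikert. Set $k = \lceil \log q \rceil$ and let the \emph{gadget vector} be $\mathbf{g}^T = (1,\, 2,\, 4,\, \ldots,\, 2^{k-1}) \in \Z_q^{1\times k}$. I take the primitive matrix to be
$$\GG = \bigl[\, \mathbf{I}_n \otimes \mathbf{g}^T \;\big|\; \mathbf{0}\,\bigr] \in \Z_q^{n\times m},$$
padded with $m - nk$ zero columns if $m > nk$. Because each of the $n$ copies of $\mathbf{g}^T$ contains the entry $1$ in its leading position, $\GG$ has full row rank $n$ over $\Z_q$.

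For the short basis, consider the lattice $\Lambda_q^\bot(\mathbf{g}^T) \subseteq \Z^k$ and exhibit the explicit matrix
$$\S_k = \begin{pmatrix} 2 & & & & q_0 \\ -1 & 2 & & & q_1 \\ & -1 & \ddots & & \vdots \\ & & \ddots & 2 & q_{k-2} \\ & & & -1 & q_{k-1} \end{pmatrix} \in \Z^{k\times k},$$
where $(q_0,\ldots,q_{k-1})$ encodes $q$ in binary (with a minor adjustment of the last column in the power-of-two case). Column by column, $\mathbf{g}^T \S_k \equiv \mathbf{0} \pmod{q}$: the first $k-1$ columns telescope as $2\cdot 2^{i-1} - 2^{i} = 0$, and the last column evaluates to $\sum_i 2^i q_i \equiv q \equiv 0 \pmod q$. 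Since $|\det(\S_k)| = q$ equals the determinant of $\Lambda_q^\bot(\mathbf{g}^T)$, the matrix $\S_k$ is a basis. The block-diagonal matrix
$$\T_\GG = \begin{pmatrix} \mathbf{I}_n \otimes \S_k & \mathbf{0} \\ \mathbf{0} & \mathbf{I}_{m-nk} \end{pmatrix} \in \Z^{m\times m}$$
is then a basis of $\Lambda_q^\bot(\GG)$ with the same Gram--Schmidt norm as $\S_k$ (the trailing identity block contributes orthogonalized vectors of norm~$1$).

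The delicate step is the bound $\|\widetilde{\T_\GG}\| \leq \sqrt{5}$, which reduces to bounding $\|\widetilde{\S_k}\|$. I would process the columns of $\S_k$ from right to left: under this ordering, the bidiagonal columns of the form $(0,\ldots,0,2,-1,0,\ldots)^T$ have orthogonalized norm exactly $2$, and the $q$-column acquires a residual whose squared length is bounded by $4 + \sum_{i\geq 0} 4^{-i} \leq 5$ because its components along the previously orthogonalized directions decay geometrically. This Gram--Schmidt bookkeeping is the technical heart of the argument and is exactly what~\cite{MicciancioP12} establishes; I expect it to be the main obstacle to verify rigorously.

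For the inversion algorithm $\GG^{-1}$ on input $\U \in \Z_q^{n\times m}$, take the binary expansion $u_{i,j} = \sum_{\ell=0}^{k-1} b_{i,j,\ell}\, 2^\ell$ with $b_{i,j,\ell} \in \{0,1\}$ of each entry of $\U$, and place the bit-vector $(b_{i,j,0},\ldots,b_{i,j,k-1})^T$ into rows $(i-1)k+1$ through $ik$ of the $j$-th column of $\X$, with the remaining $m - nk$ rows set to zero. Then $(\GG \X)_{i,j} = \sum_{\ell=0}^{k-1} 2^\ell b_{i,j,\ell} = u_{i,j}$ in $\Z_q$, so $\GG\X = \U$; by construction $\X \in \{0,1\}^{m\times m}$, and the procedure runs in deterministic polynomial time.
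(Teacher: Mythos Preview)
The paper does not supply a proof of this lemma at all: it is stated as a cited result from~\cite{MicciancioP12}, with no argument given in the present paper. There is therefore nothing in the paper to compare your proposal against beyond the bare statement.

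That said, your proposal is a faithful reconstruction of the Micciancio--Peikert gadget construction: the definition of $\GG$ as $\mathbf{I}_n \otimes \mathbf{g}^\top$ with zero-padding, the explicit bidiagonal basis $\S_k$ of $\Lambda_q^\bot(\mathbf{g}^\top)$, the block-diagonal assembly of $\T_\GG$, and the bit-decomposition realisation of $\GG^{-1}$ are all correct and standard. One small inaccuracy in your Gram--Schmidt sketch: processed right-to-left, the $q$-column comes first and has norm up to $\sqrt{k}$, not $\leq \sqrt{5}$, and the subsequent bidiagonal columns do not orthogonalise to norm ``exactly~$2$'' in general (their inner product with the $q$-column depends on the bits $q_i$). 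The actual~\cite{MicciancioP12} argument runs left-to-right, where each bidiagonal column already has raw Euclidean norm $\sqrt{5}$ and orthogonalisation can only shrink it, with a separate calculation bounding the orthogonalised $q$-column. You correctly flag this step as the delicate one, so the discrepancy is in the heuristic sketch rather than a structural error.
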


\smallskip

\noindent
{\bf Discrete Gaussians. }
Let $\Lambda$ be an integer lattice. For vector $\c\in\mathbb{R}^m$ and any parameter $s>0$, define
     $\rho_{s,\c}(\rr)=\exp(-\pi\dfrac{\|\rr-\c\|^2}{s^2})$ and $\rho_{s,\c}(\Lambda)=\sum_{\rr\in \Lambda}\rho_{s,\c}(\rr).$
The  discrete Gaussian distribution over $\Lambda$ with center $\c$ and parameter $s$ is $\forall \rr\in \Lambda,\CD_{\Lambda,s,\c}(\rr)=\dfrac{\rho_{s,\c}(\rr)}{\rho_{s,\c}(\Lambda)}.$
 If $\c=\0$, we simplify to use notations $\rho_{s}$ and $\CD_{\Lambda,s}$.

We also need the following lemma to prove the correctness of construction in Section~\ref{section: main scheme}.
 \begin{lemma}[{\cite{GPV08,MicciancioP12,GayMW15,LingNWZ17}}]\label{lemma:distribution}
 Let $n\geq 1,q\geq 2$, $m\geq 2n  \log q $ and $k \geq 1$ be integers. Let $\mathbf{F}$ be a full-rank matrix in $\Z_q^{n\times m}$ and $\T_{\mathbf{F}}$ be a basis of $\Lambda^\bot_q (\mathbf{F})$. Assume that $s \geq ||\widetilde{\T_{\mathbf{F}}}|| \cdot\omega(\sqrt{\log n})$.  Then, for $\ZZ\hookleftarrow\left(\CD_{\Z^m,s}\right)^k$, the distribution of $\mathbf{F}\ZZ \bmod q$ is statistically close to the uniform distribution over $\Z_q^{n\times k}$.

\end{lemma}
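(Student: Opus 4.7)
My plan is to view this as a standard consequence of the smoothing lemma of Micciancio and Regev, instantiated on the $q$-ary lattice $\Lambda_q^\bot(\mathbf{F})$, combined with the observation that $\mathbf{F}$ induces a bijection from $\mathbb{Z}^m/\Lambda_q^\bot(\mathbf{F})$ to $\mathbb{Z}_q^n$ when $\mathbf{F}$ has rank $n$ over $\mathbb{Z}_q$. So the whole argument is a chain of three ingredients, and no substantial new calculation is needed.

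First, I would bound the smoothing parameter of $\Lambda_q^\bot(\mathbf{F})$ in terms of the Gram--Schmidt norm of the supplied basis $\mathbf{T}_\mathbf{F}$: by the standard inequality (GPV08), for any $\epsilon > 0$,
\[
\eta_\epsilon\!\left(\Lambda_q^\bot(\mathbf{F})\right) \;\leq\; \|\widetilde{\mathbf{T}_\mathbf{F}}\| \cdot \sqrt{\log(2m(1+1/\epsilon))/\pi}.
\]
Choosing $\epsilon = \epsilon(n)$ to be negligible (e.g.\ $\epsilon = 2^{-n}$), the hypothesis $s \geq \|\widetilde{\mathbf{T}_\mathbf{F}}\| \cdot \omega(\sqrt{\log n})$ guarantees $s \geq \eta_\epsilon(\Lambda_q^\bot(\mathbf{F}))$, since $m = O(n \log q)$ makes the extra $\sqrt{\log m}$ factor absorbed by $\omega(\sqrt{\log n})$.

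Second, I would invoke the standard smoothing consequence: when $s \geq \eta_\epsilon(\Lambda_q^\bot(\mathbf{F}))$, the distribution of $\mathbf{z} \bmod \Lambda_q^\bot(\mathbf{F})$ for $\mathbf{z} \hookleftarrow \mathcal{D}_{\mathbb{Z}^m, s}$ is within statistical distance $\epsilon/2$ of the uniform distribution over the quotient $\mathbb{Z}^m/\Lambda_q^\bot(\mathbf{F})$. Because $\mathbf{F} \in \mathbb{Z}_q^{n\times m}$ is full-rank, the map $\mathbf{r} \mapsto \mathbf{F}\mathbf{r} \bmod q$ descends to a group isomorphism from $\mathbb{Z}^m/\Lambda_q^\bot(\mathbf{F})$ onto $\mathbb{Z}_q^n$; applying this isomorphism, $\mathbf{F}\mathbf{z} \bmod q$ is within statistical distance $\epsilon/2$ of the uniform distribution on $\mathbb{Z}_q^n$.

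Third, I would extend the single-column statement to the $k$-column matrix. Since the $k$ columns of $\mathbf{Z} \hookleftarrow (\mathcal{D}_{\mathbb{Z}^m, s})^k$ are drawn independently, the product distribution of $\mathbf{F}\mathbf{Z} \bmod q$ is within statistical distance at most $k \cdot \epsilon/2$ of the uniform distribution over $\mathbb{Z}_q^{n\times k}$ (standard hybrid argument). As $k$ is polynomial in $n$ and $\epsilon$ is negligible, this distance is negligible, establishing the claim.

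The only mildly delicate step is the first one, verifying that the hypothesis $s \geq \|\widetilde{\mathbf{T}_\mathbf{F}}\|\cdot \omega(\sqrt{\log n})$ really beats the smoothing parameter for a negligible $\epsilon$; everything else is a direct citation of well-known results. Since $m$ is polynomially bounded in $n$, $\sqrt{\log(2m(1+1/\epsilon))/\pi}$ is dominated by any $\omega(\sqrt{\log n})$ function for a suitable negligible $\epsilon$, so this verification is routine. I do not expect any serious obstacle.
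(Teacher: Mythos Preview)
Your proposal is correct and follows the standard smoothing-parameter argument from the cited references. Note, however, that the paper does not actually supply its own proof of this lemma: it is stated as a known fact imported from \cite{GPV08,MicciancioP12,GayMW15,LingNWZ17}, so there is no in-paper argument to compare against beyond the citation itself.
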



\smallskip

\noindent
{\bf Sampling algorithms. }
It was shown in~\cite{ABB10,MicciancioP12} how to efficiently sample short vectors from specific lattices. Algorithms $\SamplePre$, $\mathsf{SampleBasisLeft}$, $\SampleLeft$ and $\SampleRight$ from those works will be employed in our construction and the security proof.
\begin{description}
\item[$\mathsf{SamplePre}$]\hspace{-3.5pt}$(\A,\hspace{2pt}\mathbf{T_A},\hspace{2pt}\u,\hspace{2pt}s)$: On input a full-rank matrix $\A \in \mathbb{Z}^{n\times m}_q$, a trapdoor $\mathbf{T_A}$ of $\Lambda_q^\bot(\A)$, a vector $\u\in\mathbb{Z}^{n}_q$, {and} a Gaussian parameter $s\geq \|\widetilde{\mathbf{T_A}}\|\cdot \omega(\sqrt{\log m})$, it outputs a vector $\e\in\mathbb{Z}^{m}$ with a distribution statistically close to $\mathcal{D}_{\Lambda^{\u}_q(\A),s}$.\smallskip
\item[$\mathsf{SampleBasisLeft}$]\hspace{-5pt}$(\A,\hspace{2pt}\mathbf{M},\hspace{2pt}\mathbf{T_A},\hspace{2pt}s)$: On input a full-rank matrix $\A \in \mathbb{Z}^{n\times m}_q$, a matrix $\mathbf{M} \in \mathbb{Z}^{n\times m}_q$, a trapdoor $\mathbf{T_A}$ of $\Lambda_q^\bot(\A)$ {and} a Gaussian parameter $s\geq \|\widetilde{\mathbf{T_A}}\|\cdot \omega(\sqrt{\log 2m})$, it outputs a basis $\T_{\FF}$ of $\Lambda^{\bot}_q(\FF)$, where $\FF=[\A\hspace*{1.5pt}|\hspace*{1.5pt}\M] \in \mathbb{Z}_q^{n \times 2m}$ and $||\widetilde{\T_\FF}||=||\widetilde{\T_\A}||$.\smallskip
\item[$\mathsf{SampleLeft}$]\hspace{-5pt}$(\A,\hspace{2pt}\mathbf{M},\mathbf{T_A},\hspace{2pt}\u,\hspace{2pt}s)$: On input full-rank matrix $\A \in \mathbb{Z}^{n\times m}_q$, a matrix $\mathbf{M} \in \mathbb{Z}^{n\times m}_q$, a trapdoor $\mathbf{T_A}$ of $\Lambda_q^\bot(\A)$,
a vector $\u \in\mathbb{Z}^{n}_q$, {and} a Gaussian parameter $s\geq \|\widetilde{\mathbf{T_A}}\|\cdot \omega(\sqrt{\log 2m})$, it outputs a vector $\mathbf{z} \in\mathbb{Z}^{2m}$, which is sampled from a distribution statistically close to $\mathcal{D}_{\Lambda^{\mathbf{u}}_q(\FF),s}$. Here we define $\FF=\left[\A\hspace*{1.5pt}|\hspace*{1.5pt}\M\right] \in \mathbb{Z}_q^{n \times 2m}$.\smallskip
\item[$\mathsf{SampleRight}$]\hspace{-5pt}$(\A,\hspace{2pt}\R,\hspace{2pt}\mathbf{G},\hspace{2pt}\mathbf{T_G},\hspace{2pt}\u,\hspace{2pt}s)$: On input matrices $\A \in \mathbb{Z}^{n\times m}_q, \R\in\Z^{m\times m}$,  the primitive matrix $\GG \in \mathbb{Z}^{n\times m}_q$ together with trapdoor $\mathbf{T_G}$ of $\Lambda_q^\bot(\GG)$,
a vector $\u \in\mathbb{Z}^{n}_q$, {and} a Gaussian parameter $s\geq \|\widetilde{\mathbf{T_B}}\|\cdot ||\R||\cdot \omega(\sqrt{\log m})$, it outputs a vector $\mathbf{z} \in\mathbb{Z}^{2m}$, sampled from a distribution statistically close to $\mathcal{D}_{\Lambda^{\mathbf{u}}_q(\FF),s}$. Here we define $\FF=\left[\A\hspace*{1.5pt}|\hspace*{1.5pt}\A\R+\GG\right] \in \mathbb{Z}_q^{n \times 2 m}$.
\end{description}

The above algorithms can be easily extended to the case of taking a matrix $\mathbf{U} \in \mathbb{Z}_q^{n \times k}$, for some $k \geq 1$. Then, the output is a matrix $\mathbf{Z} \in \mathbb{Z}$ with $k$ columns.


\vspace{.4cm}

\noindent
{\bf Learning With Errors. } We now recall the Learning With Errors ($\mathsf{LWE}$) problem~\cite{Regev05}, as well as its hardness.

\begin{definition}[$\mathsf{LWE}$] \label{definition:LWE} Let $n, m \geq 1,q \geq 2$, and let $\chi$ be a probability distribution on $\Z$. For $\s \in\mathbb{Z}^{n}_q$, let $\A_{\s,\chi}$ be the distribution obtained by sampling $\mathbf{a}\stackrel{\$}{\leftarrow} \mathbb{Z}^{n}_q$ and $e \hookleftarrow \chi$, and outputting
the pair $\left(\mathbf{a},\mathbf{a}^\top\s + e\right) \in \Z_q^n\times\Z_q$. The $(n,q,\chi)$-$\mathsf{LWE}$ problem asks to distinguish $m$ samples chosen
according to $\A_{\s,\chi}$ (for $\s \stackrel{\$}{\leftarrow} \mathbb{Z}^{n}_q$) and $m$ samples chosen according to the uniform distribution
over $\Z_q^n\times\Z_q$.
\end{definition}

If $q$ is a prime power and $B \geq\sqrt{n}\cdot \omega\left(\log n\right)$, then there exists an efficient sampleable
$B$-bounded distribution $\chi$ (i.e., $\chi$ outputs samples with norm at most $B$ with overwhelming
probability) such that $(n,q,\chi)$-$\mathsf{LWE}$ is as least as hard as worst-case lattice problem \textnormal{SIVP} with approximate factor~$O\left( nq/B \right)$ (see, e.g.,~\cite{Regev05,Pei09,MicciancioP12}).

\subsection{The Agrawal-Freeman-Vaikuntanathan Predicate Encryption Scheme}\label{subsection:PE}
Next, we recall the LWE-based predicate encryption, proposed by Agrawal, Freeman and Vaikuntanathan (AFV)~\cite{AgrawalFV11} and improved by~Xagawa~\cite{Xagawa13}. The scheme is for inner-product predicates, where an attribute is expressed as a vector $\y\in \Z_q^{\ell}$ (for some integers $q$ and $\ell$) and a predicate $f_{\x}$  is associated with a vector $\x\in \Z_q^{\ell}$. We say that $f_{\x}(\y)=1$ if $\langle \x,\y \rangle=0$, and $f_{\x}(\y)=0$ otherwise.

In the AFV scheme, the key authority possesses a short basis $\T_{\A}$ for a public lattice $\Lambda_q^{\bot}(\A)$, generated by  $\TrapGen$ algorithm. Each predicate vector $\x$ is associated with a super-lattice of $\Lambda_q^{\bot}(\A)$, a short vector of which can be efficiently computed using the trapdoor~$\T_{\A}$. Such a short vector allows  to decrypt a Dual-Regev ciphertext~\cite{GPV08} bound to an attribute vector $\y$ satisfying $\langle \x,\y \rangle=0$. In order to improve efficiency,
Xagawa~\cite{Xagawa13} suggested an enhanced variant that employs the primitive matrix $\GG$.  In the below, we will describe the AFV scheme with Xagawa's improvement.  The scheme works with parameters $n,q,\ell,m,\kappa,s,\chi$ and an encoding function $\mathsf{encode}: \{0,1\} \rightarrow \{0,1\}^\kappa$, where $\mathsf{encode}(b) = (b, 0, \ldots, 0) \in \{0,1\}^\kappa$ - the binary vector that has bit $b$ as the first coordinate and $0$ elsewhere.


\begin{description}
\item[$\Setup$] \hspace*{-5pt}: Generate $(\A,\T_{\A})\leftarrow \TrapGen(n,q,m)$. Pick $\V\xleftarrow{\$} \Z_q^{n\times \kappa }$ and for each $ i\in [\ell]$, sample $\A_i\xleftarrow{\$}\Z_q^{n \times m}$. Output
    $$
    \pp_{\PE}=(\A,\hspace{2pt}\{\A_i\}_{i\in [\ell]}, \hspace{2pt}\V);\hspace{.7cm}\msk_{\PE}=\T_{\A}.
  $$
\item[$\KeyGen$] \hspace*{-5pt}:  For a predicate vector $\x=(x_1,\ldots,x_{\ell})\in\Z_q^{\ell}$, set $\A_{\x}=\sum\limits_{i=1}^{\ell} \A_i \GG^{-1}(x_i\cdot\GG) \in \mathbb{Z}_q^{n \times m}$ and output $\sk_{\x}= \ZZ$ by running $$\SampleLeft\left(\A,\hspace{2pt}\A_{\x},\hspace{2pt}\T_{\A},\hspace{2pt}\V,\hspace{2pt}s\right).
$$
Another way is to set $\sk_{\x}= \T_{\x}$ by running $$
\SampleBLeft\left(\A,\hspace{2pt}\A_{\x},\hspace{2pt}\T_{\A},\hspace{2pt}s\right).
$$ Note that we then can sample $$\ZZ\leftarrow\SamplePre\left([\A\mid\A_{\x}],\hspace{2pt}\T_{\x},\hspace{2pt}\V,\hspace{2pt}s\right).$$
\item[$\Enc$] \hspace*{-5pt}:  To encrypt a message $M\in\{0,1\}$ under an attribute~$\y=(y_1,\ldots,y_{\ell})\in \Z_q^{\ell}$, choose~$\s\stackrel{\$}{\leftarrow}\Z_q^{n}$, $\e{\hookleftarrow}\chi^{\kappa}$, $\e_1{\hookleftarrow}\chi^m$, $\R_i\stackrel{\$}{\leftarrow}\{-1,1\}^{m\times m}$ for each $i\in [\ell]$, and then output $\ct=(\c,\hspace{2pt}\c_0,\hspace{2pt}\{\c_i\}_{i\in [\ell]})$, where:
\[
\begin{cases}
 \hspace*{1pt}  \c\hspace*{3pt}  = \V^\top{\s}+\e+\mathsf{encode}(M)\cdot\lfloor \frac{q}{2}\rfloor \in\Z_q^{\kappa},\\
  \c_0 =\A^\top \s+\e_1\in\Z_q^{m},\\
 \c_i =\left(\A_i+y_i\cdot\GG\right)^\top \s+\R_i^\top\e_1\in\Z_q^{m}, \forall i\in [\ell]
\end{cases}
\]
  \item[$\Dec$] \hspace*{-5pt}:  Set $\c_{\x}=\sum\limits_{i=1}^{\ell} \left(\GG^{-1}(x_i\cdot\GG)\right)^\top  \c_i \in\Z_q^m $ and  $\mathbf{d}=\c-\ZZ^\top[\c_0 \mid\c_{\x}]\in\Z_q$. If  $\lfloor \frac{2}{q} \cdot \d \rceil = \mathsf{encode}(M')$, for some $M' \in \{0,1\}$, then output~$M'$. Otherwise, output $\bot$.
\end{description}

Agrawal, Freeman and Vaikuntanathan~\cite{AgrawalFV11} showed that, under the $(n,q,\chi)$-\textsf{LWE} assumption, their {PE} scheme  satisfies the  weak attribute-hiding security notion in a selective attribute setting (short as {\sf wAH-sA-CPA}), defined by Katz, Sahai and Waters~\cite{KatzSW08}. Xagawa~\cite{Xagawa13} proved that the same assertion holds for his improved scheme variant.
  We thus have the following theorem.
\begin{theorem}[Adapted from~\cite{AgrawalFV11,Xagawa13}]\label{theorem:PE}
If the $(n,q,\chi)$-{\sf LWE} problem is hard, then the improved AFV PE scheme is {\sf wAH-sA-CPA} secure.
\end{theorem}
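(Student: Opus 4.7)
\medskip

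\noindent\textbf{Proof proposal.}
My plan is to proceed by a standard sequence of hybrid games, following the Agrawal--Boneh--Boyen template and adapting it to handle the inner-product predicate structure together with Xagawa's use of the primitive matrix $\GG$. Fix a PPT adversary $\Ad$ and let $\y^{(0)}, \y^{(1)} \in \Z_q^\ell$ be the two challenge attributes that $\Ad$ commits to at the beginning of the selective game. For each bit $b \in \{0,1\}$, I will transform the real game (where the challenger encrypts $M_b$ under $\y^{(b)}$) into an ideal game whose view is independent of $b$, arguing indistinguishability between consecutive hybrids. All key queries $\x$ submitted by $\Ad$ satisfy $\langle \x, \y^{(0)} \rangle \neq 0$ and $\langle \x, \y^{(1)} \rangle \neq 0$, which is the hook that makes the simulation work.

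The first hybrid replaces the uniform matrices $\A_i$ from $\Setup$ by $\A_i = \A\R_i^* - y_i^{(b)} \GG$, where $\R_i^* \stackrel{\$}{\leftarrow} \{-1,1\}^{m\times m}$; by a standard leftover hash lemma argument (taking $m$ sufficiently large, e.g.\ $m \geq 2n\log q$), the joint distribution of $(\A, \{\A_i\}_{i\in[\ell]})$ remains statistically close to the real one. With this embedding, for any queried predicate $\x$ one gets
\[
\A_\x \;=\; \sum_{i=1}^{\ell} \A_i \GG^{-1}(x_i\GG) \;=\; \A\R_\x^* \;-\; \langle \x, \y^{(b)}\rangle \cdot \GG,
\quad\text{where}\quad \R_\x^* = \sum_{i=1}^{\ell} \R_i^* \GG^{-1}(x_i\GG).
\]
Since $\langle \x, \y^{(b)}\rangle \in \Z_q^*$ by the game's restriction, the matrix $[\A \mid \A_\x]$ has the form $[\A \mid \A\R_\x^* + h_\x \GG]$ with $h_\x \neq 0$, and therefore $\SampleRight$ (using the publicly known trapdoor $\T_\GG$) produces a key $\ZZ$ whose distribution is statistically close to the one produced by $\SampleLeft$ with $\T_\A$. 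This lets me move to a hybrid in which $\A$ is sampled truly uniformly (indistinguishable by the $\TrapGen$ guarantee) and key queries are answered via $\SampleRight$, with no trapdoor for $\A$ in the simulation.

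At this stage the ciphertext in the challenge takes the form $\c_0 = \A^\top \s + \e_1$, $\c = \V^\top \s + \e + \mathsf{encode}(M_b)\lfloor q/2\rfloor$, and $\c_i = (\R_i^*)^\top\c_0$ for all $i\in[\ell]$ (using that $\A_i + y_i^{(b)}\GG = \A\R_i^*$). The next hybrid invokes the $(n,q,\chi)$-$\mathsf{LWE}$ assumption to replace $(\A^\top\s + \e_1,\, \V^\top\s + \e)$ by a uniform pair in $\Z_q^{m}\times \Z_q^{\kappa}$; crucially, the simulator no longer needs $\T_\A$, so the reduction is clean. After this switch, $\c$ is uniform and hence statistically hides the message $M_b$. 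Finally, a second application of the leftover hash lemma in the form $(\A,\, \A\R_i^*,\, (\R_i^*)^\top \c_0)\approx_s (\A,\, \A\R_i^*,\, \u_i)$ for uniform $\u_i$---valid since $\c_0$ is now uniform and independent of the fresh randomness remaining in $\R_i^*$---turns each $\c_i$ into a uniform vector. The resulting view is independent of both $M_b$ and $\y^{(b)}$, so the views corresponding to $b=0$ and $b=1$ are computationally indistinguishable.

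I expect the main technical obstacle to be the simultaneous handling of the matrices $\R_1^*,\ldots,\R_\ell^*$: they appear both in the public parameters (via $\A\R_i^*$), in every key query response (via $\R_\x^*$), and in the challenge ciphertext (via $(\R_i^*)^\top \c_0$), so I have to verify that the residual entropy of the $\R_i^*$'s conditioned on the adversary's whole view is still large enough for the final leftover-hash step. This is exactly the point where the parameter regime ($m$ large enough, $q$ and $s$ tuned so that $\SampleRight$ produces keys of the same Gaussian width as $\SampleLeft$) has to be set consistently, and where Xagawa's $\GG^{-1}$ gadget keeps the norms of $\R_\x^*$ small enough for $\SampleRight$ to apply. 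Once these parameters are pinned down, the chain of games goes through and the reduction loss is linear in the number of key queries plus the hybrid transitions, giving the stated $\mathsf{LWE}$-based security.
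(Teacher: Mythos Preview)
The paper does not give its own proof of this theorem: it is stated as a result ``adapted from~\cite{AgrawalFV11,Xagawa13}'' and is used as a black box in the security proof of the main SR-PE construction. Your proposal is therefore not competing with anything in the paper itself; rather, it is a reconstruction of the argument in those cited works, and as such it is essentially correct. The embedding $\A_i = \A\R_i^* - y_i^{(b)}\GG$, the identity $\A_\x = \A\R_\x^* - \langle \x,\y^{(b)}\rangle\GG$, answering key queries with $\SampleRight$ (which applies precisely because $\langle \x,\y^{(b)}\rangle \neq 0$ under weak attribute-hiding), and the final $\mathsf{LWE}$-plus-leftover-hash step are exactly the ingredients of the Agrawal--Freeman--Vaikuntanathan proof as refined by Xagawa.

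Two small corrections. First, the ``main technical obstacle'' you flag is less severe than you suggest: the output of $\SampleRight$ is statistically close to the target discrete Gaussian \emph{independently} of the short matrix $\R_\x^*$ used, so key-query answers do not leak additional information about the $\R_i^*$'s beyond what is already fixed by $\A\R_i^*$; the final leftover-hash step then only has to cope with the pair $(\A\R_i^*,(\R_i^*)^\top\c_0)$, which is the standard generalized-LHL situation. Second, for the same reason the reduction incurs no multiplicative loss in the number of key queries --- the simulator answers every query without guessing --- so your closing remark about a linear loss should be dropped.
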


\subsection{The Complete Subtree Method}\label{subsection:RIBE}


The complete subtree (CS) method, proposed by Naor, Naor and Lotspiech~\cite{NaorNL01}, has been
widely used in revocation systems. It makes use of a node selection algorithm (called $\mathsf{KUNodes}$).  In the algorithm, we build a complete binary $\BT$ with at least $N$ leaf nodes, where $N$ is the maximum number of users in the system. Each user is corresponding to a leaf node of $\BT$.  We use the following notation: If $\theta$ is a non-leaf node, $\theta_{\ell}$ and $\theta_{r}$
denote the left and right child of~$\theta$, respectively. Whenever $\nu$ is a leaf node, the set $\Path(\nu)$
stands for the collection of nodes on the path from $\theta$ to the root (including $\theta$ and the root). The $\mathsf{KUNodes}$ algorithm takes as input the binary tree~$\BT$, a revocation list $\RL$ and a time period $\t$, and outputs a set of nodes $Y$ which is the smallest subset of nodes that contains  an ancestor of all the leaf nodes corresponding to  non-revoked users.
It is known~\cite{NaorNL01} that the set $Y$ generated by~$\mathsf{KUNodes}(\BT,\RL,\t)$ has a size at most $r\log \frac{N}{r}$, where $r$ is the number of users in $\RL$.
The detailed description
of algorithm $\mathsf{KUNodes}$ is given below.
 and an example is illustrated in Figure~\ref{figcs}.
\vspace{-.5cm}
\begin{center}
\begin{eqnarray*}
&~~& \hspace*{-9pt}\mathsf{KUNodes}(\BT,\RL,\t) \\
&~~&   X,Y\leftarrow \emptyset \\
&~~& \forall (\nu_i,\t_i)\in\RL: \text{ if } \t_i\leq \t, \text{ then add } \Path(\nu)\text{ to } X  \\
&~~& \forall \theta\in X:
  \text{ if } \theta_{\ell}\not\in X, \text{ then add } \theta_{\ell}\text{ to } Y;
   \hspace*{2pt}\text{ if } \theta_{r}\not\in X, \text{ then add } \theta_{r}\text{ to } Y  \\
&~~& \text{If } Y=\emptyset, \text{ then add the root to } Y \\
&~~&  \text{Return } Y
\end{eqnarray*}
\end{center}

\begin{figure}
\begin{center}
\begin{tikzpicture}[,>=stealth',level/.style={sibling distance = 4cm/#1,
  level distance = 0.7cm}]

  \draw(-3.66,-2.1) circle(0.6em);
  \draw(-1.66,-2.1) circle(0.6em);
  \draw(2,-0.7) circle(0.6em);

\node [arn_y] {root}
    child{ node [arn_r] {13}
            child{ node [arn_r] {9}
            	child{ node [arn_r] {1}} 
							child{ node [arn_n] {2}}
            }
            child{ node [arn_r] {10}
							child{ node [arn_r] {3}}
							child{ node [arn_n] {4}}
            }
    }
    child{ node [arn_r] {14}
            child{ node [arn_r] {11}
							child{ node [arn_r] {5}}
							child{ node [arn_r] {6}}
            }
            child{ node [arn_r] {12}
							child{ node [arn_r] {7}}
							child{ node [arn_r] {8}}
            }
		}

;
\end{tikzpicture}
\caption{Assuming that $\RL=\{2,4\}$, it follows that $\{1,3,14\}\leftarrow\mathsf{KUNodes}(\BT,\RL,\t)$. For non-revoked identity ${5}$, node $14\in\mathsf {KUNode}(\BT,\RL)$ is an ancestor of $5$. For identity ${4}\in\RL$, the set $\Path(4)=\{4,10,13,\text{root}\}$ is disjoint with~$\mathsf{KUNodes}(\BT,\RL,\t)$.}
\label{figcs}
\end{center}
\end{figure}


\section{Server-Aided Revocable Predicate Encryption}   \label{section: model}
In this section, we describe the rigorous definition and security model of SR-PE, based on the server-aided revocation mechanism advocated by Qin et al.~\cite{SRIBE} and the model of SR-ABE by Cui et al.~\cite{CuiDLQ16}.

The mechanism advocated by Qin et al.~\cite{SRIBE} is depicted in Figure~\ref{fig:SR-PE}. Specifically, when a new recipient joins the system, the KGC issues a private key and a corresponding token both associated with his identity and predicate. The former is given to the recipient and the latter is sent to the server. At each time period, the KGC issues an update key to the server who combines with the stored users' tokens to generate the transformation keys for users. A sender encrypts a message under an attribute and a time period and the ciphertext is sent to the untrusted server. The latter transforms the ciphertext to a partially decrypted ciphertext using a transformation key corresponding to the recipient's identity and the time period bound to the ciphertext. Finally, the recipient recovers the message from the partially decrypted ciphertext using his private key.

\begin{figure}
\begin{center}
\setlength{\unitlength}{1mm}
   \begin{picture}(72, 35)
   \thicklines

\put(20,26.5){\framebox[1.7cm]{KGC}}
\put(52,6.5){\framebox[1.7cm]{Recipient}}
\put(20,6.5){\framebox[1.7cm]{Server}}
\put(-15,6.5){\framebox[1.7cm]{Sender}}

\put(37,28){\line(1,0){23}}
\put(60,28){\vector(0,-1){18}}

\put(15,28){\line(1,0){5}}
\put(15,28){\line(0,-1){12}}
\put(15,16){\vector(1,-1){6}}
\put(28,25.5){\vector(0,-1){15.5}}

\put(2,8){\vector(1,0){18}}
\put(37,8){\vector(1,0){15}}
\put(69,8){\vector(1,0){15}}

\put(41,29){Private key}
\put(5,18){Token}
\put(29,18){Update key}
\put(4,5){\scriptsize{Ciphertext}}
\put(38,5){\scriptsize{Partially}}
\put(38,2.5){\scriptsize{decrypted}}
\put(38,0){\scriptsize{ciphertext}}
\put(70,5){\scriptsize{Message}}
\end{picture}
\caption{Architecture of server-aided revocation mechanism}
\label{fig:SR-PE}
\end{center}
\end{figure}
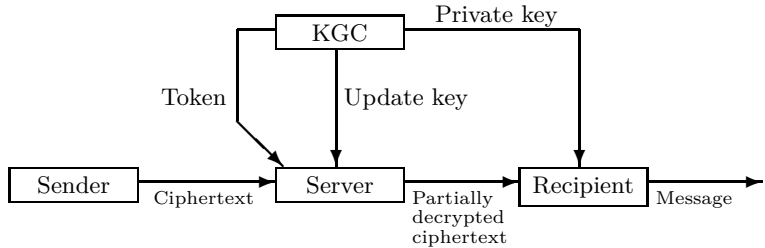

In comparison with Cui et al.'s model of SR-ABE~\cite{CuiDLQ16}, our model offers two crucial differences.
In~\cite{CuiDLQ16}, it is assumed that each user in the system has to maintain a public-secret key-pair (which can possibly be a key-pair for an ordinary PKE scheme). Although this setting can eliminate the need for a secure channel between the KGC and the users (as explained by Cui et al.), we find it somewhat unnatural in the context of identity-based/attribute-based/predicate cryptosystems. (After all, one of the main advantages of these systems over PKE systems is the elimination of users' public keys.) In contrast, our model of SR-PE does not require the users to maintain their own public keys. In the same spirit of IBE systems, we get rid of the notion of users' public keys and we assume a secure channel for transmitting users' private keys.

Another notable difference between our model and~\cite{CuiDLQ16} is due to gap between security notions for ABE and PE systems. Our model preserves the attribute-hiding property of PE systems, which, unlike ABE systems, attributes bound to the ciphertexts are not revealed during decryptions.

A server-aided revocable predicate encryption (SR-PE) scheme involves $4$ parties: KGC, sender, recipient, and untrusted server. It is assumed that the server stores a list of tuples (identity, predicate, token), i.e., $(\id,f,\tau_{\id,f})$. Algorithms among the parties are as follows:


\begin{description}
\item[$\mathsf{Sys}$]\hspace*{-4.5pt}$(1^\lambda)$ is run by the KGC. It takes as input a security parameter $\lambda$ and outputs {the system parameters}~$\params$.\smallskip
  \item[$\Setup$]\hspace*{-4.5pt}$(\params)$ is run by the KGC. It takes as input the system parameters $\params$ and outputs {public parameters}~$\pp$, a master secret key~$\msk$, a revocation list~$\RL$ (initially empty), and a state~$\st$. We assume that~$\pp$ is an implicit input of all other algorithms.\smallskip
\item[$\mathsf{UserKG}$] \hspace*{-4.5pt}$(\msk,\hspace{1pt}\id,\hspace{1pt}f)$ is run by the KGC. It takes as input the master secret key $\msk$ and an identity $\id$ with predicate $f$. It outputs a private key $\sk_{\id,f}$ which is sent to the recipient through a secret channel.\smallskip
  \item[$\Token$]\hspace*{-4.5pt}$(\msk,\hspace{1pt}\id,\hspace{1pt}f,\hspace{1pt}\st)$ is run by the KGC. It takes as input the master secret key $\msk$, an identity $\id$ with a predicate $f$, and state $\st$. It outputs a token $\tau_{\id,f}$ and an updated state $\st$. The token $\tau_{\id,f}$ is sent to the server through a public channel.\smallskip
  \item[$\UpdKG$]\hspace*{-4.5pt}$(\msk,\hspace{1pt}\t,\hspace{1pt}{\RL},\hspace{1pt}\st)$ is run by the KGC. It takes as input the master secret key $\msk$, a time $\t$, the current revocation list $\RL$, and state $\st$. It outputs an update key $\uk_\t$ which is sent to the server through a public channel.\smallskip
  \item[$\TranKG$]\hspace*{-4.5pt}$(\id,\hspace{1pt}\tau_{\id,f},\hspace{1pt}\uk_\t)$ is run by the server. It takes as input an identity with the corresponding token $\tau_{\id,f}$ and an {update} key $\uk_\t$, and outputs a transformation key $\tk_{\id,\t}$ for user $\id$ at the time period $\t$.\smallskip
  \item[$\Enc$]\hspace*{-4.5pt}$(I,\hspace{1pt}\t,\hspace{1pt}M)$ is run by each sender. It takes as input an attribute $I$, a time $\t$, and a message $M$. It outputs a ciphertext $\ct_\t$ which is publicly sent to the server.
  \item[$\Transform$]\hspace*{-4.5pt}$(\ct_\t,\hspace{1pt}\id,\hspace{1pt}\tk_{\id,\hspace{1pt}\t})$ is run by the sever. It takes as input a ciphertext $\ct_\t$, and an identity with the corresponding transform key $\tk_{\id,\t}$. It outputs a partially decrypted ciphertext $\ct'_\id$, which is sent to the recipient with identity $\id$ through a public channel.\smallskip
  \item[$\Dec$]\hspace*{-4.5pt}$(\ct'_\id,\hspace{1pt}\sk_{\id,f})$ is run by each recipient. It takes as input a partially decrypted ciphertext $\ct'_\id$ and a private key $\sk_{\id,f}$. It outputs a message $M$ or symbol~$\bot$.\smallskip
  \item[$\Revoke$]\hspace*{-4.5pt}$(\id,\hspace{1pt}\t,\hspace{1pt}\RL,\hspace{1pt}\st)$ is run by the KGC. It takes as input an identity $\id$ to be revoked, a revocation time $\t$, the current revocation list $\RL$, and a state $\st$. It outputs an updated revocation list $\RL$.
\end{description}

The correctness requirement for an SR-PE scheme states that: For any $\lambda\in \N$, all possible state~$\st$, and any revocation list $\sf RL$, if all parties follow the prescribed algorithms, and if $\id$ is not revoked on a time $\t$, then:
\begin{enumerate}
\item If $f(I)=1$ then $\Dec\left(\ct'_\id,\hspace{1pt}\sk_{\id,f}\right)=M$.\smallskip
\item If $f(I)=0$ then $\Dec\left(\ct'_\id,\hspace{1pt}\sk_{\id,f}\right)=\bot$ with all but negligible probability.
\end{enumerate}

Next, we give the semantic security against selective attributes chosen plaintext attacks for server-aided revocable predicate encryption (short as {\sf SR-sA-CPA}). The selective security means that the adversary needs to be announced the challenge attributes and time period before seeing public parameters. In addition, it is assumed the adversary must commits in advance the set of users to be revoked prior to the challenge time, which is similar to the semi-static query model considered in~\cite{GentryW09,Attrapadung2009}.
\smallskip
\begin{definition}[{\sf SR-sA-CPA} Security]\label{definition:security}
Let~$\mathcal{O}$ be the set of the following oracles: 
\begin{description}
 \item[$-$] ${\UserKG}(\cdot,\cdot)$: On input an identity $\id$ and a predicate $f$, return a private key $\sk_{\id,f}$ by running ${\UserKG}(\msk,\hspace{1pt}\id,\hspace{1pt}f)$.\smallskip
 \item[$-$] ${\Token}(\cdot,\cdot)$: On input an identity $\id$ and a predicate $f$,  return a token~$\tau_{\id,f}$ by running  ${\Token}(\msk,\hspace{1pt}\id,\hspace{1pt}f,\hspace{1pt}\st)$.\smallskip
 \item[$-$]  ${\UpdKG}(\cdot)$: On input a time period $\t$, return an update key $\uk_\t$ by running algorithm~${\UpdKG}(\msk,\hspace{1pt}\t,\hspace{1pt}{\RL},\hspace{1pt}\st)$. If $\t=\t^*$, then ${\RL}^*$ must be a subset of the $\RL$ at $\t^*$.\smallskip
 \item[$-$] ${\Revoke}(\cdot,\cdot)$: On input an identity $\id$ and a time $\t$, return an updated revocation list $\RL$ by running ${\Revoke}(\id,\hspace{1pt}\t,\hspace{1pt}\RL,\hspace{1pt}\st)$. Note that this oracle cannot be queried on time $\t$ if ${\UpdKG}(\cdot)$ has been queried on time $\t$.
\end{description}

An SR-PE scheme is {\sf SR-sA-CPA} secure if any PPT adversary $\Ad$ has negligible advantage in the following experiment:
\vspace{-3pt}
\begin{eqnarray*}
&~&\boxed{{\mathsf{Exp}}^{\text{\sf SR-sA-CPA}}_{\Ad}}(\lambda) \\[2.5pt]
&~~~&  \params\leftarrow{\Sys}(1^{\lambda}); \hspace*{2.5pt} I_0,I_1,\t^*,{\RL}^*\leftarrow \Ad \\
&~~~&(\pp,\msk,\st,\RL)\leftarrow \Setup(\params) \\
&~~~&M_0,M_1\leftarrow \Ad^{\mathcal{O}}(\pp) \\
&~~~&b\stackrel{\$}{\leftarrow}\{0,1\} \\
&~~~&\ct^* \leftarrow \Enc(I_b,\t^*, M_b) \\
&~~~&b'\leftarrow \Ad^{\mathcal{O}}(\ct^*) \\
&~~~&\text{Return } \hspace*{2.5pt} 1 \text{ if } b'=b \text{ and } 0 \text{ otherwise.}
\end{eqnarray*}
Beyond the condition that $M_0,M_1$ have the same length, the following restrictions are made:
\begin{enumerate}[1.]
\item Case 1: if an identity $\id^*$ with predicate $f^*$ satisfying that $f^*(I_0)=1$ or $f^*(I_1)=1$ has be queried to $\UserKG(\cdot,\cdot)$ and ${\Token}(\cdot,\cdot)$, then~$\id^*$ must be included in ${\RL}^*$.\smallskip
\item Case 2: if an identity $\id^*$ with predicate $f^*$ satisfying that $f^*(I_0)=1$ or $f^*(I_1)=1$ is not revoked at $\t^*$, then $(\id^*,f^*)$ should not be queried to the $\UserKG(\cdot,\cdot)$ oracle.
\end{enumerate}

The advantage of $\Ad$ in the experiment is defined as:
\[
  {\textsf{Adv}}^{\text{\sf SR-sA-CPA}}_{\Ad}(\lambda)=\left|\Pr\left[{\mathsf{Exp}}^{\text{\sf SR-sA-CPA}}_{\Ad}(\lambda)=1\right]-\frac{1}{2}\right|.
\]

\end{definition}
\begin{remark}
We can also define an adaptive security notion, where the adversary is not required to specify the challenge attributes $I_0,I_1$ and time period $\t^*$ before seeing the public parameters~$\pp$. Such a notion is obviously stronger than the selective notion defined above.
\end{remark} 

\section{An SR-PE Scheme from Lattices} \label{section: main scheme}
Our lattice-based SR-PE scheme can be seen as a combination of two AFV PE instances~\cite{AgrawalFV11}, one IBE instance~\cite{ABB10} and the CS method~\cite{NaorNL01}. Each recipient's identity $\id$ corresponds to a matrix $\D_{\id}$ determined by the IBE system.
The KGC generates the private key for the first PE instance with a hierarchical level for~$\D_{\id}$, and issues the token by embedding $\D_\id$ into the second PE scheme as well as using nodes in $\Path(\id)$. At each time period~$\t$, the KGC computes an update key using nodes in $\KUNodes(\BT,\RL,\t)$. Recall that  token and update key are both sent to the sever, who makes use of the intersected node in $\Path(\id)\cap\KUNodes(\BT,\RL,\t)$ to obtain a transformation key. Then, a ciphertext in our scheme is a combination of two PE ciphertexts and an extra component bound to $\t$, where all components have the same randomness (i.e., vector $\s$).
If recipient $\id$ is not revoked at time period $\t$, e.g., $\id\not\in\RL$, then the server can partially decrypt the ciphertext, via the decryption algorithm of the second PE instance.
Finally, the partially decrypted ciphertext contains a proper ciphertext for the first PE system and an additional component bound to matrix $\D_\id$ (all with randomness $\s$) so that it can be fully decrypted using the private key of $\id$ (obtained from the first PE instance and specified by $\D_{\id}$).

In the following, we will formally describe the scheme.

\begin{description}
\item[$\Sys$]\hspace*{-4.5pt}$(1^{\lambda})$: On input security parameter $\lambda$, the KGC performs the following steps:

\smallskip
\begin{enumerate}[leftmargin=*]
\item Set $n=O\left(\lambda\right)$. Choose $N=\mathsf{poly}(\lambda)$ as the maximal number of users the system will support, and arbitrary $\ell$ be the length of predicate and attribute vectors. Choose $\kappa=\omega(\log \lambda)$ as a dimension parameter.\smallskip
\item Let $q=\widetilde{O}\left(\ell^2 n^4\right)$ be a prime power, and set $m= 2n\lceil \log q \rceil$. Note that parameters $n,q,m$ specify the primitive matrix $\mathbf{G}$ (see Section~\ref{subsection:lattice-background}).\smallskip
\item Choose a Gaussian parameter $s=\widetilde{O}\left(\sqrt{m} \right)$.\smallskip
\item Set $B=\widetilde{O}\left(\sqrt{n} \right)$ and let $\chi$ be a $B$-bounded distribution.\smallskip
\item Select an efficient full-rank difference map $\H: \Z_q^n \rightarrow \Z_q^{n \times n}$.\smallskip
\item Let the identity space be $\mathcal{I}\subseteq\Z_q^n$, the time space be $\mathcal{T}\subseteq\Z_q^n$, the message space be $\mathcal{M}=\{0,1\}$, the predicate space be $\mathbb{P}=\{f_{\x}\hspace*{1.8pt}\big| \hspace*{1.8pt}\x\in\Z_q^{\ell}\}$ and the attribute space be $\mathbb{A}=\Z_q^{\ell}$ (see Section~\ref{subsection:PE}).
\item Define the encoding function $\mathsf{encode}$ (see Section~\ref{subsection:PE} ).\smallskip
\item Output $\params=\left(n,\hspace{1pt}N,\hspace{1pt}\ell,\hspace{1pt}\kappa,\hspace{1pt}q,\hspace{1pt}m,\hspace{1pt}s,\hspace{1pt}B,\hspace{1pt}\chi,\hspace{1pt}\H,\hspace{1pt}\mathcal{I},\hspace{1pt}\mathcal{T},\hspace{1pt}\mathcal{M},\hspace{1pt}\mathbb{P},\hspace{1pt}\mathbb{A},\hspace{1pt}\mathsf{encode}\right)$.
\end{enumerate}

\medskip
\item[$\Setup$]\hspace*{-4.5pt}$(\params)$: On input the system parameters~$\params$, the KGC performs the following steps:
\begin{enumerate}[leftmargin=*]
  \item Generate independent pairs $(\A,\T_{\A})$ and $(\B,\T_{\B})$ using $\TrapGen(n,q,m)$.\smallskip
  \item Select $\V\stackrel{\$}{\leftarrow}\Z^{n\times  \kappa}_q$ and $\C,\hspace{2pt}\D,\hspace{2pt}\A_i,\hspace{2pt}\B_i\stackrel{\$}{\leftarrow}\Z_q^{n\times m}$ for each $i\in [\ell]$.\smallskip
  \item Initialize the revocation list $\RL=\emptyset$. Obtain a binary tree $\BT$ with at least $N$ leaf nodes and set the state~$\st=\BT$.
  \item Set $\pp=\left(\A,\hspace{2pt}\B,\hspace{2pt}\C,\hspace{2pt}\D,\hspace{2pt}\{\A_i\}_{i\in [\ell]},\hspace{2pt}\{\B_i\}_{i\in [\ell]},\hspace{2pt}\V\right)$ and $\msk=\left(\T_{\A},\hspace{2pt}\T_{\B}\right)$.\smallskip
  \item Output $\left(\pp,\hspace{2pt}\msk,\hspace{2pt}\RL,\hspace{2pt}\st\right)$.\smallskip
\end{enumerate}

\medskip
\item[$\UserKG$]\hspace*{-4.5pt}$(\msk,\hspace{2pt}\id,\hspace{2pt}\x)$: On input the master secret key $\msk$ and an identity $\id\in \mathcal{I}$ with predicate vector $\x=(x_1,\ldots,x_\ell)\in\Z_q^{\ell}$, the KGC performs the following steps:
\smallskip
\begin{enumerate}[leftmargin=*]
\item Set $\B_{\x}=\sum\limits_{i=1}^\ell \B_i\GG^{-1}(x_i\cdot\GG)\in\Z_q^{n\times m}$ and~$\D_{\id}=\D+\H(\id)\GG\in\Z_q^{n\times m}$.\smallskip
  \item Sample  $\ZZ\leftarrow \SampleLeft\left(\B,\hspace{2pt}[\B_{\x}\mid\D_{\id}],\hspace{2pt}\T_{\B},\hspace{2pt}\V,\hspace{2pt}s\right)$. Note that $\ZZ\in\Z^{3m\times \kappa}$ and $[\B\mid\B_{\x}\mid\D_\id]\cdot\ZZ=\V$.\smallskip
   \item Output $\sk_{\id,\x}=\ZZ$.\smallskip
\end{enumerate}

\medskip
\item[$\Token$]\hspace*{-4.5pt}$(\msk,\hspace{2pt}\id,\hspace{2pt}\x,\hspace{2pt}\st)$: On input the master secret key $\msk$, an identity $\id\in \mathcal{I}$ with predicate vector $\x=(x_1,\ldots,x_\ell)\in\Z_q^{\ell}$, and state $\st$, the KGC performs the following steps:
\smallskip
\begin{enumerate}[leftmargin=*]
\item Compute $\A_{\x}=\sum\limits_{i=1}^\ell \A_i\GG^{-1}(x_i\cdot\GG)\in\Z_q^{n\times m}$.\smallskip
\item For each $\theta\in \Path({\id})$, if~$\U_{\theta}$ is undefined, then pick $\U_{\theta}\stackrel{\$}{\leftarrow} \Z_q^{n\times m}$ and store it on $\theta$; Sample
  $
  \ZZ_{1,\theta}\leftarrow \SampleLeft\left(\A,\hspace{2pt}\A_{\x},\hspace{2pt}\T_{\A},\hspace{2pt}\D_{\id}-\U_{\theta},\hspace{2pt}s\right)
  $. Note that $\ZZ_{1,\theta}\in\Z^{2m\times m}$ and $[\A\mid\A_{\x}]\cdot\ZZ_{1,\theta}=\D_{\id}-\U_{\theta}$.
\item Output the updated state $\st$ and $\tau_{\id,\x}=\{\theta,\hspace{2pt}\ZZ_{1,\theta}\}_{\theta\in \Path({\id})}$.\smallskip
\end{enumerate}

\medskip
\item[$\UpdKG$]\hspace*{-4.5pt}$(\msk,\hspace{2pt}\t,\hspace{2pt}\st,\hspace{2pt}\RL)$: On input the master secret key $\msk$, a time $\t\in \mathcal{T}$, the revocation list $\RL$ and state $\st$, the KGC performs the following steps:
\smallskip
\begin{enumerate}[leftmargin=*]
\item Compute $\C_{\t}=\C+\H(\t)\GG\in\Z_q^{n\times m}$.\smallskip
\item For each $\theta\in \KUNodes(\BT,\RL,\t)$, retrieve $\U_{\theta}$ (which is always pre-defined in algorithm {$\mathsf{Token}$}), and sample $\ZZ_{2,\theta}\leftarrow \SampleLeft\left(\A,\hspace{2pt}\C_{\t},\hspace{2pt}\T_{\A},\hspace{2pt}\U_{\theta},\hspace{2pt}s\right)$.
 Note that $\ZZ_{2,\theta}\in\Z^{2m\times m}$ and $[\A\mid\C_{\t}]\cdot\ZZ_{2,\theta}=\U_{\theta}$.\smallskip
\item Output $\uk_{\t}=\{\theta,\hspace{2pt}\ZZ_{2,\theta}\}_{\theta\in \KUNodes(\BT,\RL,\t)}$.\smallskip
\end{enumerate}

\medskip
\item[$\TranKG$]\hspace*{-4.5pt}$(\id,\hspace{2pt}\tau_{\id,\x},\hspace{2pt}\uk_\t)$: On input an identity $\id$ with token $\tau_{\id,\x}=\{\theta,\hspace{2pt}\ZZ_{1,\theta}\}_{\theta\in I}$ and an update key $\uk_{\t}=\{\theta,\hspace{2pt}\ZZ_{2,\theta}\}_{\theta\in J}$ for some set of nodes~$I,J$, the server performs the following steps:
\smallskip
\begin{enumerate}[leftmargin=*]
 \item If $I\cap J=\emptyset$, output $\bot$.\smallskip
  \item Otherwise, choose $\theta\in I\cap J$ and output $\tk_{\id,\t}=(\ZZ_{1,\theta},\hspace{2pt}\ZZ_{2,\theta})$. Note that $[\A\mid\A_{\x}]\cdot\ZZ_{1,\theta}+ [\A\mid\C_{\t}]\cdot\ZZ_{2,\theta}=\D_\id$.\smallskip
\end{enumerate}

\medskip

\item[$\Enc$]\hspace*{-4.5pt}$(\y,\hspace{2pt}\t,\hspace{2pt}M)$: On input an attribute vector $\y=(y_1,\ldots,y_\ell)\in\Z_q^\ell$, a time $\t\in\mathcal{T}$ and a message $M\in\mathcal{M}$, the sender performs the following steps:
\smallskip
 \begin{enumerate}[leftmargin=*]
 \item Sample $\s\stackrel{\$}{\leftarrow}\Z_q^{n}$, $\e_1,\e_2{\hookleftarrow}\chi^{m}$ and $\e{\hookleftarrow}\chi^{\kappa}$.\smallskip
 \item Choose $\bar{\R},\hspace{2pt}\S_i,\hspace{2pt}\R_i\stackrel{\$}{\leftarrow}\{-1,1\}^{m\times m}$ for each $i\in [\ell]$.\smallskip
\item Output $\ct_\t=(\c,\hspace{2pt}\c_{1},\hspace{2pt}\{\c_{1,i}\}_{i\in[\ell]},\hspace{2pt}\c_{1,0},\hspace{2pt}\c_{2},\hspace{2pt}\{\c_{2,i}\}_{i\in[\ell]})$ where:
\[
\begin{cases}
\c = \V^\top{\s}+\e+\mathsf{encode}({M})\cdot\lfloor \frac{q}{2}\rfloor \in\Z_q^{\kappa}, \\[1pt]
\c_1 = \A^\top\s+\e_1 \in \Z_q^m, \\[1pt]
\c_{1,i} = (\A_i+y_i\cdot\GG)^\top\s+\R_i^\top\e_1 \in \Z_q^{m}, \hspace*{6.6pt}\forall \hspace*{1pt} i\in[\ell]\\[1pt]
\c_{1,0} = \C_\t^\top\s+\bar{\R}^\top\e_1 \in \Z_q^{m},\\[1pt]
\c_2 = \B^\top\s+\e_2\in \Z_q^m, \\[1pt]
\c_{2,i} = (\B_i+y_i\cdot\GG)^\top\s+\S_i^\top\e_2 \in \Z_q^{m}, \hspace*{6.6pt}\forall \hspace*{1pt} i\in[\ell].
\end{cases}
\]
\end{enumerate}
\medskip
\item[$\Transform$]\hspace*{-4.5pt}$(\ct_\t,\hspace{2pt}\id,\hspace{2pt}\tk_{\id,\t})$: On input~$\ct_\t=(\c,\hspace{2pt}\c_{1},\hspace{2pt}\{\c_{1,i}\}_{i\in[\ell]},\hspace{2pt}\c_{1,0},\hspace{2pt}\c_{2},\hspace{2pt}\{\c_{2,i}\}_{i\in[\ell]})$ and an identity $\id$ with transformation key $\tk_{\id,\t}=(\ZZ_1,\ZZ_2)$, the server performs the following steps:
\smallskip
\begin{enumerate}[leftmargin=*]
\item Compute $\c_{1,\x}=\sum\limits_{i=1}^{\ell} \left(\GG^{-1}(x_i\cdot\GG)\right)^\top  \c_{1,i}\in\Z_q^m $.\smallskip
\item  Compute~$\bar{\c}=\ZZ_1^\top[\c_{1}\mid\c_{1,\x}]+\ZZ_2^\top[\c_1\mid\c_{1,0}]\in\Z_q^\kappa$.\smallskip
 \item Output $\ct'_\id=(\c,\c_{2},\{\c_{2,i}\}_{i\in[\ell]},\bar{\c})$.\smallskip
\end{enumerate}

\vspace{.3cm}
\item[$\Dec$]\hspace*{-4.5pt}$(\ct'_\id,\hspace{2pt}\sk_{\id,\hspace{2pt}\x})$: On input~$\ct'_\id=(\c,\hspace{2pt}\c_{2},\hspace{2pt}\{\c_{2,i}\}_{i\in[\ell]},\hspace{2pt}\bar{\c})$ and a private key~$\sk_{\id,\x}=\ZZ$, the recipient performs the following steps:
\smallskip
\begin{enumerate}[leftmargin=*]
\item Compute $\c_{2,\x}=\sum\limits_{i=1}^{\ell} \left(\GG^{-1}(x_i\cdot\GG)\right)^\top  \c_{2,i}\in\Z_q^m $.\smallskip
\item Compute $\mathbf{d}=\c-\ZZ^\top[\c_2\mid\c_{2,\x}\mid\bar{\c}] \in\Z_q^\kappa$.\smallskip
\item If  $\lfloor \frac{2}{q} \cdot \d \rceil = \mathsf{encode}(M')$, for some $M' \in \{0,1\}$, then output~$M'$. Otherwise, output $\bot$.\smallskip
\end{enumerate}

\vspace{.3cm}
\item[$\Revoke$]\hspace*{-4.5pt}$(\id,\hspace{2pt}\t,\hspace{2pt}\RL,\hspace{2pt}\st)$: On input an identity $\id$, a time $\t$, the revocation list $\RL$ and state $\st=\BT$, the KGC adds $(\id,\t)$ to $\RL$ for all nodes associated with identity $\id$ and returns $\RL$.
\end{description}

\section{Analysis}\label{section: analysis}
\subsection{Correctness and Efficiency}

\noindent
{\bf Correctness. } We will demonstrate that the scheme satisfies the correctness requirement with all but negligible probability. We proceed as in~\cite{AgrawalFV11,Xagawa13,GayMW15,LingNWZ17}.

Suppose that $\ct_\t=(\c,\hspace{2pt}\c_{1},\hspace{2pt}\{\c_{1,i}\}_{i\in[\ell]},\hspace{2pt}\c_{1,0},\hspace{2pt}\c_{2},\hspace{2pt}\{\c_{2,i}\}_{i\in[\ell]})$ is an honestly computed ciphertext of message $M \in \mathcal{M}$, with respect to some $\y \in \mathbb{A}$. Let $\tk_{\id,\t}=(\ZZ_1,\hspace{2pt} \ZZ_{2})$ be a correctly generated transformation  key, where $\id$ is not revoked at time $\t$.  Then we have:
 $$[\A\mid\A_{\x}]\cdot\ZZ_{1}+ [\A\mid\C_{\t}]\cdot\ZZ_{2}=\D_\id.$$
We also observe that the following two equations hold:
\begin{eqnarray*}
\c_{1,\x} = \sum\limits_{i=1}^{\ell} \left(\GG^{-1}(x_i\cdot\GG)\right)^\top  \c_{1,i} &=& \left(\A_{\x}+\langle \x,\y \rangle\cdot \GG\right)^ \top\s+(\R_\x)^{\top}  {\e_1},\\
\c_{2,\x} =\sum\limits_{i=1}^{\ell} \left(\GG^{-1}(x_i\cdot\GG)\right)^\top  \c_{2,i} &=& \left(\B_{\x}+\langle \x,\y \rangle\cdot \GG\right)^ \top\s+(\S_\x)^{\top}  {\e_2}.
\end{eqnarray*}
where $\R_\x= \sum\limits_{i=1}^{\ell} \R_i\GG^{-1}(x_i\cdot\GG)$ and $\S_\x= \sum\limits_{i=1}^{\ell} \S_i\GG^{-1}(x_i\cdot\GG)$.
\noindent
We now consider two cases:
\begin{enumerate}
\item Case $1$: Suppose that $\langle \x,\y \rangle=0$. In this case, we have:
$\c_{1,\x}=(\A_{\x})^\top\s+(\R_\x) ^{\top}{\e_1}$ and $\c_{2,\x}=(\B_{\x})^\top\s+(\S_\x) ^{\top}{\e_2}$.
Then in $\Transform$ algorithm, the following holds:

\begin{align*}
\bar{\c} & =\ZZ_1^\top[\c_{1}\mid\c_{1,\x}]+\ZZ_2^\top[\c_1\mid\c_{1,0}] \\
     & = \ZZ_1^\top\left(\left[\A\mid\A_{\x}\right]^\top\s+\left[\begin{array}{c}  \e_1 \\ (\R_\x)^{\top}{\e_1} \end{array}\right]\right)+\ZZ_{2}^\top\left(\left[\A\mid\C_{\t}\right]^\top\s+\left[\begin{array}{c}  \e_1 \\ \bar{\R}^\top\e_1 \end{array}\right]\right)\\[5pt]
  & = \D_{\id}^\top{\s}+\underbrace{\ZZ_1^\top\left[\begin{array}{c}  \e_1 \\ (\R_\x)^{\top}{\e_1} \end{array}\right]+\ZZ_{2}^\top\left[\begin{array}{c}  \e_1 \\ \S^\top\e_1 \end{array}\right]}_{\mathsf{error}'}
\end{align*}

and in $\Dec$ algorithm, the following holds:
\begin{align*}
\mathbf{d}&= \c-\ZZ^\top[\c_2\mid\c_{2,\x}\mid\bar{\c}] \\
          &=\V^\top\s+\e+\lfloor \frac{q}{2}\rfloor  \cdot \mathsf{encode}(M)-\ZZ^\top \left( [\B\mid \B_\x\mid \D_{\id}]^\top\s+ \left[\begin{array}{c}  \e_2 \\ (\S_{\x})^\top\e_2 \\ \mathsf{error'} \end{array}\right]  \right)\\
          &= \lfloor \frac{q}{2}\rfloor  \cdot \mathsf{encode}(M)+ \underbrace{\e-\ZZ^\top\left[\begin{array}{c}  \e_2 \\ (\S_{\x})^\top\e_2 \\ \mathsf{error'} \end{array}\right]}_{\mathsf{error}}.
\end{align*}
As in \cite{ABB10,AgrawalFV11,Xagawa13,GayMW15,LingNWZ17}, the above error term  can be showed to be bounded by $s\ell m^{2}B\cdot\omega(\log n)=\widetilde{O}(\ell^2 n^{3})$, with all but negligible probability.
In order for the decryption algorithm to recover $\mathsf{encode}(M)$, and subsequently the plaintext $M$, it is required that the error term  is bounded by $q/5$, i.e., $||\mathsf{error}||_{\infty}<q/5$. This is guaranteed by our setting of modulus $q$, i.e., $q= \widetilde{O}\left( \ell^2  n^4 \right)$.

\item Case $2$: Suppose that $\langle \x,\y \rangle \neq 0$. In this case, we have:
\begin{eqnarray*}
\c_{2,\x}=\big(\A_{\x}+\underbrace{\langle \x,\y \rangle}_{\neq 0}\cdot \GG\big)^ \top\s+(\S_\x) ^{\top}{\e_2}.
\end{eqnarray*}

Then in $\Dec$ algorithm, $\mathbf{d}= \c-\ZZ^\top[\c_2\mid\c_{2,\x}\mid\bar{\c}]$ contains the following term:
\[
\ZZ^\top [\0\mid \langle \x,\y \rangle \cdot\GG\mid \0]^\top\s\in\Z_q^\kappa.
\]
which can be written as $\langle \x,\y \rangle \cdot  (\GG\ZZ^2)^\top\s$, where $\ZZ^2\in\Z^{m\times \kappa}$ is the middle part of matrix $\ZZ$. By Lemma \ref{lemma:distribution}, we have that the distribution of $\GG\ZZ^2 \in \mathbb{Z}_q^{n \times \kappa}$ is statistically close to uniform. This implies that, vector $\d \in \mathbb{Z}_q^\kappa$ in $\Dec$ algorithm, is indistinguishable from uniform. As a result, the probability that the last $\kappa-1$ coordinates of vector $\lfloor \frac{2}{q} \cdot \d \rceil$ are all $0$ is at most $2^{-(\kappa-1)}=2^{-\omega(\log \lambda)}$, which is negligible in $\lambda$. In other words, except for negligible probability, the decryption algorithm outputs $\bot$ since it does not obtain a proper encoding $\mathsf{encode}(M') \in \{0,1\}^\kappa$, for  $M' \in \{0,1\}$.

\end{enumerate}

\medskip

\noindent
{\bf Efficiency. }  The efficiency aspect of our {SR-PE} scheme is as follows:
\begin{itemize}[leftmargin=*]
 \item[$\diamond$] The bit-size of public parameters $\pp$ is $((2\ell+4)nm+n\kappa)\log q=\widetilde{O}(\ell) \cdot \widetilde{O}\left( \lambda^2 \right)$.\smallskip
  \item[$\diamond$] The private key $\sk_{\id,\x}$ has bit-size $\widetilde{O}(\lambda)$.\smallskip
  \item[$\diamond$] The token $\tau_{\id,\x}$ has bit-size $ O(\log N)\cdot\widetilde{O}\left( \lambda^2 \right)$.\smallskip
 \item[$\diamond$] The update key $\uk_\t$ has bit-size $  O\big(r\log \frac{N}{r}\big)\cdot\widetilde{O}\left( \lambda^2 \right)$.\smallskip
 \item[$\diamond$] The bit-size of  the ciphertext $\ct_\t$  is $\widetilde{O}(\ell\lambda)$.\smallskip
 \item[$\diamond$] The bit-size of  the partially decrypted ciphertext $\ct'_\id$ is $\widetilde{O}(\ell\lambda)$.
\end{itemize}

%

\subsection{Security} \label{section: security}

In the following theorem, we prove that our scheme in Section \ref{section: main scheme} is {\sf SR-sA-CPA} secure in the standard model, under the $\mathsf{LWE}$ assumption.

\begin{theorem}\label{theorem:SRPE}
Our SR-PE scheme  satisfies the  {\sf SR-sA-CPA} security defined in Definition \ref{definition:security}, assuming hardness of the  $\left({n,q,\chi}\right)$-$\mathsf{LWE}$ problem.
\end{theorem}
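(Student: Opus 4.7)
The plan is to reduce {\sf SR-sA-CPA} security of our scheme to the hardness of $(n,q,\chi)$-{\sf LWE}, by combining the wAH-sA-CPA security of the two AFV PE instances (Theorem~\ref{theorem:PE}) with the ABB IBE programming~\cite{ABB10}. The ciphertext $\ct_\t$ contains two parallel PE layers, the $\A$-side $(\c_1,\{\c_{1,i}\})$ and the $\B$-side $(\c_2,\{\c_{2,i}\})$, sharing the randomness $\s$ and the target matrix $\V$, and glued by the time-bound component $\c_{1,0}$. The adversary's restrictions guarantee that for every matching pair $(\id^*,\x^*)$ (one with $\langle\x^*,\y_0\rangle=0$ or $\langle\x^*,\y_1\rangle=0$), at least one of the two layers remains protected: either $\UserKG(\id^*,\x^*)$ is forbidden (so the recipient-side recovery of $\bar{\c}$ in $\Dec$ is blocked) or $\id^*\in\RL^*$ so that no valid $\tk_{\id^*,\t^*}$ can be assembled (so the server-side transformation is blocked). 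I would therefore design hybrid games that move the challenge from $(\y_0,M_0)$ to $(\y_1,M_1)$ in two stages, switching one side at a time via reductions $R_\A$ and $R_\B$.

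In $R_\B$ (switching the $\B$-side), the simulator generates $(\A,\T_\A)\leftarrow\TrapGen(n,q,m)$, programs the $\B$-side parameters with the AFV structure $\B_i=\B\S_i^*-y_i^*\GG$ (so that the LWE challenge can be embedded there), applies the ABB programming $\C=\A\bar{\R}^*-\H(\t^*)\GG$ so that $\C_{\t^*}=\A\bar{\R}^*$ while $\C_\t$ for $\t\neq\t^*$ carries an invertible $(\H(\t)-\H(\t^*))\GG$ term, and sets $\D=\B\M_\D$ for a random $\M_\D\in\{-1,1\}^{m\times m}$ so that $\D_\id=\B\M_\D+\H(\id)\GG$ always carries an invertible $\GG$-component for $\id\neq 0$. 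With these matrices in place, $\Token$ and $\UpdKG$ queries are answered through $\T_\A$ via $\SampleLeft$; $\UserKG(\id,\x)$ queries are answered via $\SampleRight$ through the $\H(\id)\GG$ component in $\D_\id$; the challenge components $(\c^*,\c_2^*,\{\c_{2,i}^*\})$ together with an additional sample $\c_1^*=\A^\top\s+\e_1^*$ are produced from {\sf LWE} samples sharing the same secret $\s$, and then $\c_{1,i}^*=(\R_i^*)^\top\c_1^*$ and $\c_{1,0}^*=(\bar{\R}^*)^\top\c_1^*$ follow directly from the programming. The symmetric reduction $R_\A$ retains $\T_\B$, embeds the challenge into the $\A$-side with $\A_i=\A\R_i^*-y_i^*\GG$, and uses an analogous $\D=\A\M_\D$ programming.

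The most delicate step is the consistent programming of the tree node labels $\U_\theta$ in $R_\A$, where $\T_\A$ is not available. Following the established lattice-based (SR-)RIBE proof strategy~\cite{ChenLLWN12,NguyenWZ16}, I would exploit that for every matching $\id^*$ revoked by $\t^*$, the sets $\Path(\id^*)$ and $\KUNodes(\BT,\RL^*,\t^*)$ are disjoint (since $\RL^*$ is committed selectively), so every tree node $\theta$ can be classified in advance as either a ``Path-side'' node (where only $\ZZ_{1,\theta}$ will ever be released) or a ``$\KUNodes$-side'' node (where only $\ZZ_{2,\theta}$ will ever be released). On the first visit to $\theta$, the reduction samples the required $\ZZ_{\cdot,\theta}$ by $\SampleRight$, using the invertible $\GG$-component in $\A_\x$ when $\langle\x,\y^*\rangle\neq 0$ or in $\C_\t$ when $\t\neq\t^*$, and defines $\U_\theta$ so that the governing equation holds; this fixes the joint distribution consistently across subsequent queries. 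Conditioned on this programming, standard closeness arguments based on the properties of $\SampleLeft$/$\SampleRight$, Lemma~\ref{lemma:distribution}, and the leftover hash lemma show that all simulated oracles are within negligible statistical distance of the real ones, so any non-negligible advantage of the adversary translates into a non-negligible advantage against $(n,q,\chi)$-{\sf LWE}.
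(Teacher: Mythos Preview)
Your high-level decomposition into two ``layer-switching'' reductions $R_\A$ and $R_\B$ differs from the paper's strategy. The paper does not run a single hybrid chain; it instead partitions adversaries into Type~I (every matching $(\id^*,\x^*)$ lies in $\RL^*$) and Type~II (some matching $\id^*\notin\RL^*$, hence never queried to $\UserKG$), and runs a dedicated reduction to AFV PE security for each type. In Type~I the simulator keeps $\T_\B$, \emph{guesses one} matching $\id^*$ inside $\RL^*$ (incurring a $1/Q$ loss) and programs $\U_\theta$ only along $\Path(\id^*)$. In Type~II the simulator keeps $\T_\A$ and punctures $\D$ at the single non-revoked matching identity.

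Your hybrid route has a genuine gap: the reduction $R_\A$ as you describe it cannot serve a Type~II adversary. If some matching $\id^*$ is not revoked at $\t^*$, then $\Path(\id^*)\cap\KUNodes(\BT,\RL^*,\t^*)\neq\emptyset$, and the adversary may query both $\Token(\id^*,\x^*)$ and $\UpdKG(\t^*)$. At any node $\theta$ in that intersection you must release \emph{both} a $\ZZ_{1,\theta}$ with $[\A\mid\A_{\x^*}]\ZZ_{1,\theta}=\D_{\id^*}-\U_\theta$ and a $\ZZ_{2,\theta}$ with $[\A\mid\C_{\t^*}]\ZZ_{2,\theta}=\U_\theta$; under your programming $\A_{\x^*}=\A\R_{\x^*}^*$ and $\C_{\t^*}=\A\bar\R^*$ carry no $\GG$-component, so once $\U_\theta$ is fixed by one equation the other cannot be solved by $\SampleRight$ and you have no $\T_\A$. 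Your ``Path-side vs.\ $\KUNodes$-side'' dichotomy is thus only valid for revoked matching identities, and even there it collapses when two matching revoked identities share an ancestor (the root always does): a single $\U_\theta$ cannot be simultaneously back-defined from two pre-sampled $\ZZ_{1,\theta}$'s tied to distinct $\D_{\id}$'s. The paper sidesteps both obstacles precisely via the Type~I/II split and the single-$\id^*$ guess.

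There is also an internal inconsistency in your $R_\B$: you cannot take $(\A,\T_\A)\leftarrow\TrapGen$ \emph{and} obtain $\c_1^*=\A^\top\s+\e_1^*$ as an LWE sample sharing the secret $\s$ of the $\B$-side challenge. Either $\A$ is supplied by the LWE challenger (so $\T_\A$ is unavailable and your ``Token/UpdKG via $\T_\A$'' step fails), or the $\A$-side of the challenge is generated with a fresh secret $\s'$, in which case the simulated ciphertext deviates from the real distribution and an extra indistinguishability argument---absent from your proposal---is required.
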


\begin{proof}
We will demonstrate that if there is a PPT adversary $\mathcal{A}$ succeeding in breaking the {\sf SR-sA-CPA} security of our SR-PE scheme, then we can use it to construct a PPT algorithm $\mathcal{S}$ breaking the {\sf wAH-sA-CPA} security of the AFV PE scheme. Then the theorem follows from the fact that the  building block is secure under the $(n,q,\chi)$-$\mathsf{LWE}$ assumption (see Theorem \ref{theorem:PE}).


Let $\y_0,\y_1$ be the challenge attribute vectors, $\t^*$ be the challenge time and $\RL^*$ be the set of revoked users at $\t^*$. We assume that, without loss of generality, the adversary will make token or private key queries on identities whose predicates are satisfied by $\y_0$ or $\y_1$.  We consider two types of adversaries as follows.
\begin{description}
\item[Type I Adversary:] It is assumed that, every identity $\id^*$ whose predicate vector $\x^*$ satisfies that $\langle \x^*,\y_0 \rangle =0$ or $\langle \x^*,\y_1 \rangle =0$, must be included in $\RL^*$. In this case, the adversary is allowed to issue a query to oracle $\UserKG(\cdot,\cdot)$ on such a pair $(\id^*,\x^*)$. \smallskip
\item[Type \rom{2} Adversary:] It is assumed that there exists an $\id^*\not\in\RL^*$ whose predicate vector $\x^*$ satisfies that $\langle \x^*,\y_0 \rangle =0$ or $\langle \x^*,\y_1 \rangle =0$. In this case, $\id^*$ may be not revoked at $\t^*$ and the adversary never issues a query to oracle $\UserKG(\cdot,\cdot)$ on $(\id^*,\x^*)$.
\end{description}

\noindent
Algorithm $\mathcal{S}$ begins by randomly guessing the type of adversaries it is going to deal with. Let $Q$ be the number of users in $\RL^*$.
We separately describe algorithm $\mathcal{S}$'s progress for the two types of adversaries.

\begin{lemma}\label{lemma: security of type 1}
If there is a PPT Type \rom{1} adversary $\Ad$ breaking the {\sf SR-sA-CPA} security of our SR-PE scheme  with advantage $\epsilon$, then there is a PPT algorithm $\Sd$ breaking the {\sf wAH-sA-CPA} security of the AFV PE scheme with advantage $\epsilon/Q$.
\end{lemma}
\begin{proof}
Recall that if an identity $\id$ has the predicate vector $\x$ satisfied by the challenge attributes $\y_0$ or $\y_1$, it must be include in $\RL^*$. The simulator $\Sd$ randomly choose $j^*\xleftarrow{\$} [Q]$, at which such an identity appears. Let $\id^*$ be the $j^*$-th user in $\RL^*$ and $\x^*$ be the corresponding predicate vector.

Let $\Bd$ be the challenger in the {\sf wAH-sA-CPA} security game for the AFV PE scheme. Algorithm $\Sd$ interacts with $\Ad$ and $\Bd$ as follows.

\begin{description}

\item[Initial:] $\Sd$ runs algorithm $\mathsf{Sys}\left(1^{\lambda}\right)$ to output $\params$. Then $\Ad$ announces to $\Sd$ the target attribute vectors $\y_{0},\y_{1}$, time $\t^*$ and revocation list $\RL^*$. Algorithm  $\Sd$ forwards $\y_{0},\y_{1}$ to $\Bd$.\smallskip
\item[Setup:] $\Sd$ sets an empty revocation list $\RL$ and a binary tree $\BT$ as the sate $\st$. Then $\Sd$ prepares the public parameters as follows:
\smallskip

\begin{enumerate}[leftmargin=*]
\item Get $\pp_{\PE}=(\A,\hspace{2pt}\{\A_i\}_{i\in [\ell]},\hspace{2pt}\V)$ from~$\Bd$, where $\A,\A_i$ $\in\Z_q^{n\times m},\V\in\Z_q^{n\times \kappa}$.\smallskip
\item Generate $(\B,\T_{\B})$ by running $\TrapGen(n,q,m)$. Pick $\V \stackrel{\$}{\leftarrow}\Z^{n\times \kappa}_q$ and $\D,\B_i\stackrel{\$}{\leftarrow} \Z_q^{n\times m}$ for each $i\in [\ell]$.\smallskip
\item Select $\bar{\R}\stackrel{\$}{\leftarrow} \{-1,1\}^{m\times m}$ and set $\C=\A\bar{\R}-\H(\t^*)\GG\in\Z_q^{n\times m}$.\smallskip
\item Let $\pp=\left(\A,\hspace{2pt}\B,\hspace{2pt}\C,\hspace{2pt}\D,\hspace{2pt}\{\A_i\}_{i\in [\ell]},\hspace{2pt}\{\B_i\}_{i\in [\ell]},\hspace{2pt}\V\right)$, and send $\pp$ to the adversary $\Ad$. Note that the distribution of $\pp$ is exactly the one expected by $\Ad$.\smallskip
\end{enumerate}
\smallskip
\item[Private Key Oracle:]  When $\Ad$ issues a private key query, $\Sd$ performs the same as in the real scheme since it knows the master secret key part $\T_{\B}$.\smallskip

\item[Token and Update Key Oracles:]  The simulator first defines $\U_{\theta}$ for each $\theta\in\BT$ as follows:
\smallskip
\begin{enumerate}[leftmargin=*]
\item If $\theta\in\Path(\id^*)$, pick $\ZZ_{1,\theta}\hookleftarrow\mathcal{D}_{\Z^{2m\times m},s}$ and set~$\U_{\theta}=\D_{\id^*}-[\A|\A_{\x^*}]\cdot\ZZ_{1,\theta}$.\smallskip
\item If $\theta\not\in\Path(\id^*)$, pick $\ZZ_{2,\theta}\hookleftarrow\mathcal{D}_{\Z^{2m\times m},s}$ and set $\U_{\theta}=[\A|\C_{\t^*}]\cdot\ZZ_{2,\theta}$. \smallskip
\end{enumerate}

\smallskip
If $\Ad$ queries a token for $(\id,\x)$ such that $\langle \x,\y_0 \rangle \neq 0$ and $\langle \x,\y_1 \rangle \neq 0$, algorithm $\Sd$ forwards $\x$ to $\Bd$.  Receiving a PE private key $\T_{\x}$ from $\Bd$, algorithm $\Sd$ performs as in the real scheme except that algorithm $$\mathsf{Sampre}([\A\mid\A_{\x}],\hspace{2pt}\T_{\x},\hspace{2pt}\D_{\id}-\U_{\theta},\hspace{2pt}s)$$ replaces algorithm $\mathsf{SampleLeft}$.
\smallskip
If $\Ad$ queries a token for  $(\id,\x)\neq (\id^*,\x^*)$ together with $\langle \x,\y_0 \rangle = 0$ or $\langle \x,\y_1 \rangle = 0$, the simulator returns $\bot$. For the query on $(\id^*,\x^*)$, it returns~$\{\theta,\ZZ_{1,\theta}\}_{\theta\in\Path({\id^*})}$ as defined above. Since the specific $id^*$ is unknown in $\Ad$'s view, $\Sd$ can simulate successfully with probability at least $1/Q$.

\smallskip
For update key of $\t\neq\t^*$, note~$\C_{\t}=\C+\H(\t)\GG=\A\bar{\R}+(\H(\t)-\H(\t^*))\GG$. Algorithm $\Sd$ can compute~$\uk_\t$ as in the real scheme except that algorithm $$\mathsf{SampRight}(\A,\hspace{2pt}\bar{\R},\hspace{2pt}(\H(\t)-\H(\t^*)\GG),\hspace{2pt}\T_{\GG},\U_{\theta},\hspace{2pt}s)$$ replaces algorithm $\mathsf{SampleLeft}$. For the challenge time period $\t^*$, the simulator $\Sd$ returns $\{\theta,\ZZ_{2,\theta}\}_{\theta\in\KUNodes(\BT,\RL,\t^*)} $ as defined above since $\KUNodes(\BT,\RL,\t^*)$ is disjoint with $\Path({\id^*})$.

\smallskip
Next, we observe that, the columns of these matrices are sampled via algorithm $\SampleLeft$ in the real scheme, while they are either sampled via algorithm $\SampleRight$, $\SamplePre$ or sampled
from $\mathcal{D}_{\Z^m,s}$ in the simulation. The properties of these sampling algorithms (see Section~\ref{section: background})
will guarantee that the two distributions are statistically indistinguishable.
\smallskip
\item[Challenge:] $\Ad$ gives two messages $M_0,M_1\in\mathcal{M}$ to $\Sd$ who prepares the challenge ciphertext as follows:
\begin{enumerate}[leftmargin=*]
\item Sample $\s\stackrel{\$}{\leftarrow}\Z_q^{n}$ and $\e_2{\hookleftarrow}\chi^{m}$.  Choose $\S_i\xleftarrow{\$}\{-1,1\}^{m\times m}$ for each $i\in [\ell]$.\smallskip
  \item Pick $d\stackrel{\$}{\leftarrow}\{0,1\}$. Set
  $M'_0=M_d,\hspace*{5pt} M'_1=M_{1\oplus d}$, where $\oplus$ denotes the addition modulus $2$.

Forward $M'_0,M'_1$ as two challenge messages to the PE challenger $\Bd$. The latter chooses $c\stackrel{\$}{\leftarrow}\{0,1\}$ and returns a ciphertext $(\c',\hspace{2pt}\c'_0,\hspace{2pt}\{\c'_i\}_{i\in[\ell]})$ as a PE encryption of $M'_c$ under attribute vector $\y_c$. \smallskip
 \item Output $\ct^*=(\c^*,\hspace{2pt}\c^*_{1},\hspace{2pt}\{\c^*_{1,i}\}_{i\in[\ell]},\hspace{2pt}\c^*_{1,0},\hspace{2pt}\c^*_{2},\hspace{2pt}\{\c^*_{2,i}\}_{i\in[\ell]})$ as an SR-PE encryption of $M_d$ under $\y_d,\t^*$, where:
\[
\begin{cases}
\c^* =\c' \in\Z_q^{\kappa}, \\
\c^*_1 = \c'_0 \in \Z_q^m, \\
\c^*_{1,i} = \c'_i \in \Z_q^{m}, \hspace*{6.6pt}\forall \hspace*{1pt} i\in[\ell]\\[1pt]
\c^*_{1,0} = \bar{\R}^\top\c'_0 \in \Z_q^{m},\\
\c^*_2 = \B^\top\s+\e_2\in \Z_q^m, \\
\c^*_{2,i} =(\B_i+y_i\cdot\GG)^\top\s+\S_i^\top\e_2 \in \Z_q^{m}, \hspace*{6.6pt}\forall \hspace*{1pt} i\in[\ell].
\end{cases}
\]
\end{enumerate}

\smallskip

\item[Guess:] After being allowed to make additional queries, $\Ad$ outputs $d'\in\{0,1\}$, which is the guess that  the challenge ciphertext $\ct^*$ is an encryption of $M_{d'}$ under $\y_{d'}$ and $\t^*$. Then $\Sd$ computes $c'=d\oplus d'$ and returns it to $\Bd$ as the guess for the bit $c$ chosen by the latter.
\end{description}
\smallskip
Recall that we assume that $\Ad$ breaks the {\sf SR-sA-CPA} security of our SR-PE scheme with probability $\epsilon$, which means
$$\textsf{Adv}_{\Ad}^{\text{\sf SR-sA-CPA}}(\lambda)=\left|\Pr [d'=d\oplus c]-\frac{1}{2}\right|=\epsilon.$$
On the other hand, by construction, we have  $d' = d \oplus c \Leftrightarrow d' \oplus d = c\Leftrightarrow  c' = c$. It then follows that
$$
\textsf{Adv}_{\Sd, \mathsf{PE}}^{\text{\sf wAH-sA-CPA}}(\lambda)  =\left|\Pr [c=c']-\frac{1}{2}\right|=\epsilon/Q.
$$
\qed
\end{proof}

\begin{lemma}\label{lemma: security of type 2}
If there is a PPT Type II adversary $\Ad$ breaking the {\sf SR-sA-CPA} security of our SR-PE scheme with advantage $\epsilon$, then there is a PPT adversary $\Sd$ breaking the {\sf wAH-sA-CPA} security of the AFV PE scheme with the same advantage.
\end{lemma}

\begin{proof}
 Recall that there is an identity $\id^*$ whose predicate is satisfied by $\y_0$ or $\y_1$ and it is not included in $\RL^*$.

 Let $\Bd$ be the challenger in the {\sf wAH-sA-CPA} game for the PE scheme. Algorithm $\Sd$ interacts with $\Ad$ and $\Bd$ as follows.

\begin{description}
\item[Initial:] $\Sd$ first runs $\mathsf{Sys}\left(1^{\lambda}\right)$ to output $\params$. Then $\Ad$ announces to $\Sd$ the target attribute vectors $\y_{0},\y_{1}$ and time $\t^*$. Algorithm  $\Sd$ forwards $\y_{0},\y_{1}$ to $\Bd$.\smallskip
\item[Setup:] $\Sd$ sets an empty revocation list $\RL$ and a binary tree $\BT$ as the sate $\st$. Then $\Sd$ prepares the public parameters as follows:

\smallskip
\begin{enumerate}[leftmargin=*]
\item Receive $\pp_{\PE}=(\B,\hspace{2pt}\{\B_i\}_{i\in [\ell]},\hspace{2pt}\V)$ from~$\Bd$, where $\B,\B_i$ $\in\Z_q^{n\times m},\V\in\Z_q^{n\times \kappa}$.\smallskip
 \item Generate $(\A,\T_{\A})$ by running $\TrapGen(n,q,m)$. Select $\C,\A_i\stackrel{\$}{\leftarrow} \Z_q^{n\times m}$ for each $i\in [\ell]$.
 \item Select $\bar{\S}\stackrel{\$}{\leftarrow} \{-1,1\}^{m\times m}$ and set $\D=\B\bar{\S}-\H(\id^*)\GG$.\smallskip
\item Let the public parameters be $\pp=\left(\A,\hspace{2pt}\B,\hspace{2pt}\C,\hspace{2pt}\D,\hspace{2pt}\{\A_i\}_{i\in [\ell]},\hspace{2pt}\{\B_i\}_{i\in [\ell]},\hspace{2pt}\V\right)$ and send $\pp$ to the adversary $\Ad$.\smallskip
\end{enumerate}

\smallskip
\item[Private Key Oracle:] $\Ad$ is not allowed to issue a private key query for $\id^*$. When $\Ad$ makes a query to $\UserKG(\cdot,\cdot)$ oracle on $(\id,\x)$ such that $\id\neq\id^*$, $\Sd$ returns $\ZZ$ by running $$\mathsf{SampleRight}([\B\mid\B_{\x}],\hspace{2pt} \bar{\S},\hspace{2pt}(\H(\id)-\H(\id^*))\GG,\hspace{2pt}\V,\hspace{2pt}s).$$

\item[Token and Update Key Oracles:] As $\Sd$ knows the master secret key $\T_{\A}$, it can answer all token and update key queries.

\smallskip
\item[Challenge:] $\Ad$ gives two messages $M_0,M_1\in\{0,1\}$ to $\Sd$, who prepares the challenge ciphertext as follows:

\smallskip

\begin{enumerate}[leftmargin=*]
\item Sample $\s\stackrel{\$}{\leftarrow}\Z_q^{n}$, $\e_1{\hookleftarrow}\chi^{m}$. Choose $\bar{\R},\R_i\xleftarrow{\$}\{-1,1\}^{m\times m}$ for each $i\in [\ell]$. \smallskip
\item Pick $d\stackrel{\$}{\leftarrow}\{0,1\}$ and set $M'_0=M_d, M'_1=M_{1\oplus d}$. Forward $M'_0,M'_1$ as two challenge messages to the PE challenger $\Bd$. The latter chooses $c\stackrel{\$}{\leftarrow}\{0,1\}$ and returns $(\c',\hspace{2pt}\c'_0,\hspace{2pt}\{\c'_{i}\}_{i\in [\ell]})$ as a PE encryption of $M'_c$ under $\y_c$. \smallskip
\item Output $\ct^*=(\c^*,\hspace{2pt}\c^*_{1},\hspace{2pt}\{\c^*_{1,i}\}_{i\in[\ell]},\hspace{2pt}\c^*_{1,0},\hspace{2pt}\c^*_{2},\hspace{2pt}\{\c^*_{2,i}\}_{i\in[\ell]})$ as an SR-PE encryption of $M_d$ under $\y_d,\t^*$, where:
\[
\begin{cases}
\c^* =\c' \in\Z_q^{\kappa}, \\[1pt]
\c^*_1 = \A^\top\s+\e_1 \in \Z_q^m, \\[1pt]
\c^*_{1,i} = (\A_i+y_i\cdot\GG)^\top\s+\R_i^\top\e_1 \in \Z_q^{m}, \hspace*{6.6pt}\forall \hspace*{1pt} i\in[\ell]\\[1pt]
\c^*_{1,0} = \C_\t^\top\s+\bar{\R}^\top\e_1 \in \Z_q^{m},\\[1pt]
\c^*_2 = \c'_0\in \Z_q^m, \\[1pt]
\c^*_{2,i} =\c'_{i} \in \Z_q^{m}, \hspace*{6.6pt}\forall \hspace*{1pt} i\in[\ell].
\end{cases}
\]
\end{enumerate}

\smallskip

\item[Guess:] After being allowed to make additional queries,~$\Ad$ outputs $d'\in\{0,1\}$, which is the guess that  the challenge ciphertext $\ct^*$ is an encryption of $M_{d'}$ under $\y_{d'}$ and $\t^*$. Then $\Sd$ computes $c'=d\oplus d'$ and returns it to $\Bd$ as the guess for the bit $c$ chosen by the latter.
\end{description}
\smallskip
Recall that we assume that $\Ad$ breaks the {\sf SR-sA-CPA} security of our SR-PE scheme with probability $\epsilon$, which means
$$
\textsf{Adv}_{\Ad}^{\text{\sf SR-sA-CPA}}(\lambda)=\left|\Pr [d'=d\oplus c]-\frac{1}{2}\right|=\epsilon.
$$
By construction, we have  $d' = d \oplus c \Leftrightarrow d' \oplus d = c\Leftrightarrow  c' = c$. It then follows that
$$
\textsf{Adv}_{\Sd, \mathsf{PE}}^{\text{\sf wAH-sA-CPA}}(\lambda)  =\left|\Pr [c=c']-\frac{1}{2}\right|=\epsilon.
$$
\qed
\end{proof}

\smallskip

Finally, recall that algorithm $\Sd$ can guess the type of the adversary correctly with probability $1/2$ and the adversary's behaviour is independent from the guess. It then follows from the results of Lemma~\ref{lemma: security of type 1} and  Lemma~\ref{lemma: security of type 2} that
\[
\textsf{Adv}_{\Ad}^{\text{\sf SR-sA-CPA}}(\lambda) = \frac{1}{2}\Big(\frac{1}{Q}\textsf{Adv}_{\Sd, \mathsf{PE}}^{\text{\sf wAH-sA-CPA}}(\lambda) + \textsf{Adv}_{\Sd, \mathsf{PE}}^{\text{\sf wAH-sA-CPA}}(\lambda) \Big).
\]
By Theorem~\ref{theorem:PE}, we then have that $\textsf{Adv}_{\Ad}^{\text{\sf SR-sA-CPA}}(\lambda) = \mathrm{negl}(\lambda)$, provided that the $(n,q,\chi)$-LWE assumption holds. This concludes the proof.
\qed
\end{proof}

\section{Conclusion and Open Problems} \label{section: collusion}
We introduced the server-aided revocation mechanism in the setting of predicate encryption and then gave a lattice-based instantiation. We proved that the scheme is selectively secure based on the \textsf{LWE} assumption.  Achieving the stronger adaptive security notion seems to require that the underlying PE be adaptively secure. However, to the best of our knowledge, existing lattice-based PE schemes~\cite{AgrawalFV11,Xagawa13,GayMW15,GorbunovVW15} only achieved selective security. We therefore view the problem of constructing adaptively secure lattice-based SR-PE as an interesting open question.
Another question that we left unsolved is to investigate whether our
design approach (i.e., combining two PE instances, one IBE instance and the CS method) would yield a generic construction for SR-PE.

\smallskip
\noindent
{\sc Acknowledgements. }
We thank the reviewers for helpful discussions and comments.
 The research was supported by the ``Singapore Ministry of Education under Research Grant  MOE2016-T2-2-014(S)''.

\end{document}